\let\doendproof\endproof
\renewcommand\endproof{~\hfill$\qed$\doendproof}
\newcommand{\NP}{\ensuremath{\mathsf{NP}}}
\DeclareMathOperator{\MSO}{MSO}
\begin{document}
\title{Rooted Cycle Bases}

\author{David Eppstein\inst{1} \and J. Michael McCarthy\inst{2} \and Brian E. Parrish\inst{2}}

\institute{Computer Science Department, University of California, Irvine. \and
Department of Mechanical and Aerospace Engineering,\\ University of California, Irvine.}

\date{ }
\maketitle

\thispagestyle{empty}
\pagestyle{plain}

\begin{abstract}
A cycle basis in an undirected graph is a minimal set of simple cycles whose symmetric differences include all Eulerian subgraphs of the given graph.
We define a rooted cycle basis to be a cycle basis in which all cycles contain a specified root edge, and we investigate the algorithmic problem of constructing rooted cycle bases. We show that a given graph has a rooted cycle basis if and only if the root edge belongs to its 2-core and the 2-core is 2-vertex-connected, and that constructing such a basis can be performed efficiently. We show that in an unweighted or positively weighted graph, it is possible to find the minimum weight rooted cycle basis in polynomial time.  Additionally, we show that it is \NP-complete to find a fundamental rooted cycle basis (a rooted cycle basis in which each cycle is formed by combining paths in a fixed spanning tree with a single additional edge) but that the problem can be solved by a fixed-parameter-tractable algorithm when parameterized by clique-width.
\end{abstract}

\section{Introduction}

\emph{A cycle basis} of an undirected graph is a set of cycles such that all cycles in the graph have a unique representation as an algebraic sum of basis cycles. In this paper we study algorithms for finding a special type of cycle basis which we call a \emph{rooted cycle basis}, in which all cycles in the basis contain a specified root edge.

Cycle bases  have diverse applications including subway system scheduling~\cite{Lie-ORP-07}, the analysis of distributed algorithms~\cite{BouPetVil-PODC-04}, and bioinformatics~\cite{AguIst-JCB-12,LemMaj-NAR-06}. The specific motivation for our rooted variant of the problem comes from mechanical engineering, where cycle bases
have long been used in static analysis of structures such as truss bridges~\cite{Kav-CMAME-76} and in the kinematics of moving bodies~\cite{KecKruHil-MSD-97}. We recently used this method as part of a system for constructing the configuration space of moving linkages~\cite{ParMcCEpp-JMM-15}, systems that include automobile suspensions, fold-out sofa-beds, and legs for walking robots.

In this configuration space construction problem, systems of rigid two-dim\-en\-sional \emph{links} 
are connected at \emph{joints} where one link can rotate around a point of another with one degree of freedom. A system of links and joints is called a \emph{kinematic chain}; fixing the position of one \emph{ground} link results in a system called a \emph{mechanism} or \emph{inversion}, and distinguishing a second \emph{input link} (connected to the ground by a joint and to which force is applied to control the rest of the system) results in a system called a \emph{linkage}~\cite{ParMcCEpp-JMM-15}. The structure of a linkage can be expressed combinatorially by a \emph{linkage graph}, an undirected graph with a vertex for each link and an edge for each joint, including a distinguished ground-input edge. The requirement that the combined motion of the linkage have one degree of freedom can be expressed combinatorially by the property that the linkage graph is \emph{$(\frac32,2)$-tight}~\cite{LeeStr-DM-08}: every $k$-vertex induced subgraph must have at most $\frac32k-2$ edges, and the whole graph must have exactly $\frac32n-2$ edges, where $n$ is the number of vertices in the graph and links in the linkage. Links may cross each other in the plane, resulting in a non-planar linkage graph (\autoref{fig:nonplanar}).

\begin{figure}[t]
\centering\includegraphics[width=0.8\textwidth]{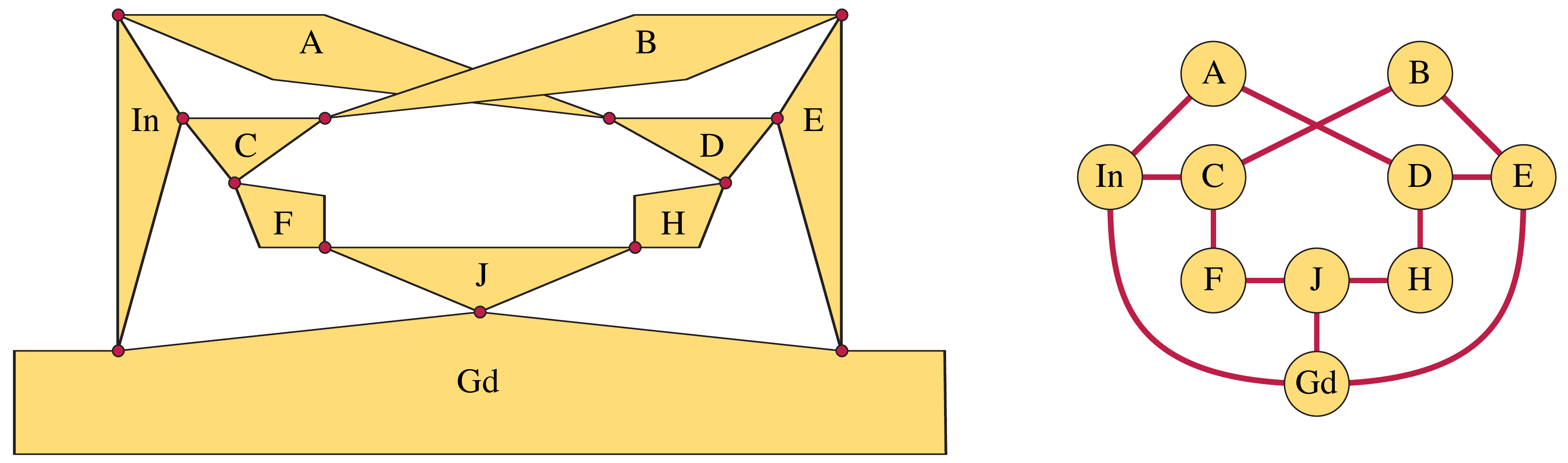}
\caption{A linkage and its linkage graph (a subdivision of $K_{3,3}$)}
\label{fig:nonplanar}
\end{figure}

Given a linkage with its linkage graph, each input-to-ground path has an associated equation representing the requirement that the joints along the path have angles consistent with the fixed ground position at both ends of the path. Our system for constructing the configuration space of a linkage chooses a complete and non-redundant subset of these path equations and uses Dixon determinants to solve this system of equations~\cite{ParMcCEpp-JMM-15}. Each path determining one of these equations can be turned into a cycle by adding the input-ground edge, and a set of equations chosen in this way is complete and non-redundant if and only if the corresponding set of cycles forms a cycle basis of the linkage graph. However, all of these cycles contain the input-ground edge, so the system of equations that we seek comes from a rooted cycle basis. Additionally, because the equation solver forms the computational bottleneck of our system, we would like the system of equations that we construct to be as simple as possible, corresponding to the problem of finding a \emph{minimum rooted cycle basis}.

\paragraph{New results.}
We provide the first algorithmic study of the problem of constructing rooted cycle bases.
We have the following new results:
\begin{itemize}
\item As a warm-up to our main result, we show that an arbitrary graph $G$ with designated root edge~$e$ has a rooted cycle basis in which all cycles contain $e$ if and only if the 2-core of $G$ is 2-vertex-connected and contains~$e$. When a rooted cycle basis exists, it can be constructed in time $O(mn)$. This is tight: there exist graphs for which every rooted cycle basis has total size $\Theta(mn)$.
\item Our main result is that, in an unweighted or positively weighted graph with a designated root edge, we can find the minimum weight rooted cycle basis by a randomized algorithm with nearly-optimal $O(mn+n^2\log n)$ expected time or by a polylogarithmically slower deterministic algorithm. This basis is always \emph{weakly fundamental}: its cycles can be ordered so that each cycle contains an edge that is not in any earlier cycle. Our algorithm uses a greedy method for finding each cycle, with a tie-breaking rule that avoids greedy choices that do not lead to a valid cycle basis.
\item In Appendix~\ref{sec:duality} we show that it is \NP-complete to determine whether a graph $G$ with root edge~$e$ has a \emph{fundamental} rooted cycle basis, a rooted cycle basis determined from a spanning tree $T$ by choosing all cycles formed by an edge not in $T$ and a path in~$T$. It remains \NP-complete even when $G$ is planar. Our proof is based on the observation that, in planar graphs, fundamental rooted cycle bases are dual to a form of Hamiltonian cycle. Additionally, we use Courcelle's theorem to show that finding a fundamental rooted cycle basis is fixed-parameter-tractable in the clique-width of the input.

\end{itemize}

In comparison, for arbitrary cycle bases, every graph has a fundamental cycle basis, which may be constructed using any spanning tree algorithm. Finding unrestricted minimum weight cycle bases takes polynomial time~\cite{AmaIulRiz-IPCO-10,Hor-SJC-87,MehMic-JEA-06,KavLieMeh-CSR-09}. However, finding an unrestricted minimum weight weakly fundamental cycle basis is \NP-hard~\cite{Riz-Algo-09}, and cannot be solved by the same greedy strategy that we use for rooted cycle bases, of choosing the shortest cycle that includes a new~edge.

\section{Preliminaries}

By $\mathbb{F}_2$ we mean the field with two elements $0$ and $1$ under mod-$2$ arithmetic. If $U$ is an arbitrary finite set, the subsets of $U$ form a vector space $\mathbb{F}_2^U$ over $\mathbb{F}_2$ with the empty set as origin and the symmetric difference of sets as addition.

We define a \emph{rooted graph} to be an undirected graph $G=(V,E)$ with a designated root edge~$e$.
A \emph{cycle} is a connected 2-regular subgraph; a cycle is \emph{rooted} if it contains $e$, and \emph{Hamiltonian} if it contains every vertex of $G$. The \emph{edge space} of~$G$ is the vector space $\mathbb{F}_2^E$. The \emph{cycle space} of $G$ is the subspace of the edge space generated by edge sets of cycles; its elements are  subgraphs of $G$ with even degree at every vertex~\cite{Tut-DM-71}. A \emph{cycle basis} of $G$ is a set of cycles that forms a basis of the cycle space~\cite{KavLieMeh-CSR-09}. A cycle basis is \emph{rooted} if all its cycles are rooted.

A \emph{spanning tree} of an undirected graph $G$ is a subgraph that includes all vertices of $G$, and is connected with no cycles. Any edge $f$ that does not belong to a spanning tree $T$ gives rise to a \emph{fundamental cycle} for $T$ consisting of $f$ plus the unique path in~$T$ connecting the endpoints of~$f$. The fundamental cycles for $T$ form a cycle basis; a  basis formed in this way is called \emph{fundamental}.

A \emph{matroid}~\cite{Wel-MT-10} may be defined as a family of subsets of a finite set, called the \emph{independent sets} of the matroid, with two properties:
\begin{itemize}
\item Every subset of an independent set is independent.
\item If $I_1$ and $I_2$ are independent sets and $|I_1|<|I_2|$, then there exists an element $x$ belonging to $I_2\setminus I_1$ such that $I_1\cup\{x\}$ is independent.
\end{itemize}
The linearly independent subsets of a finite family of vectors in any vector space form a \emph{linear matroid}.
In a matroid, a \emph{basis} is an independent set all of whose supersets are dependent; for linear matroids, this notion coincides with the standard definition of a basis of a vector space.

If the elements of a matroid are given real-valued weights, then the basis with minimum total weight can be constructed by a greedy algorithm, generalizing Kruskal's algorithm for minimum spanning trees: initialize a set $\mathcal{B}$ to be the empty set, and consider the elements in sorted order by their weights, adding each element to $\mathcal{B}$ if the result would remain independent. In particular, if the edges of an undirected graph $G$ are given weights, the weight of a cycle may be defined as the sum of the weights of its edges, and the weight of a cycle basis may be defined as the sum of the weights of its cycles. Then the minimum weight cycle basis may be found by considering all of the cycles of the graph in sorted order by weight, adding each one to the basis if the result would remain independent. This algorithm may be sped up by considering only a special set of polynomially-many candidate cycles, leading to polynomial-time construction of the minimum weight cycle basis in any graph~\cite{AmaIulRiz-IPCO-10,Hor-SJC-87,MehMic-JEA-06,KavLieMeh-CSR-09}.

A \emph{simple path} in a graph $G$ is a connected subgraph with two degree-one vertices (its endpoints) and with all remaining vertices (its interior vertices) having degree exactly two. An \emph{open ear decomposition} of $G$ is a collection of simple paths $P_i$ for $i=0,1,2,\dots$ (called ears) with the following properties:
\begin{itemize}
\item The first ear $P_0$ is a single edge.
\item The two endpoints of each ear $P_i$ with $i>0$ appear in earlier-numbered ears.
\item No interior vertex of an ear appears in any earlier ear.
\end{itemize}
A graph has an open ear decomposition if and only if it is $2$-vertex-connected (no vertex deletion can disconnect the remaining graph)~\cite{Whi-TAMS-32}. This decomposition can be constructed in linear time, with any edge as the first ear~\cite{Lov-FOCS-85,Tar-SJC-72,Ram-SoPA-93}.
The number of ears equals one plus the dimension of the cycle space. 

A vertex of $G$ belongs to at least one cycle of $G$ if and only if it belongs to the \emph{$2$-core} of~$G$, the subgraph formed by removing isolated vertices and degree-one vertices until all remaining vertices have degree $\ge 2$. Therefore, the cycle bases of $G$ are the same as the cycle bases of its $2$-core.

\section{Existence and construction of rooted cycle bases}

The following lemma is a special case of Menger's theorem, but we give a proof as we use the proof construction in our algorithms.

\begin{lemma}
\label{lem:2vc-paths}
Let $e$ be an edge of a $2$-vertex-connected graph $G$. Then for every two distinct vertices $u$ and $v$ of $G$ there exist two vertex-disjoint paths (possibly of length zero) from $u$ and $v$ respectively to the two endpoints of $e$.
\end{lemma}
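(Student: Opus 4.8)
The plan is to reduce the lemma to one application of Menger's theorem in its vertex form. First I would build an auxiliary graph $H$ from $G$ by adding a new vertex $s$ adjacent to both $u$ and $v$, and a new vertex $t$ adjacent to both endpoints of $e$; note that $s$ and $t$ are distinct and non-adjacent. I claim that a pair of internally-vertex-disjoint $s$-$t$ paths in $H$ is exactly what we need: deleting $s$ and $t$ from such a pair leaves two paths, each starting at one of $u, v$ (the neighbours of $s$) and ending at one endpoint of $e$ (the neighbours of $t$). Since the two $s$-$t$ paths share no internal vertex, the two restricted paths use different elements of $\{u,v\}$ and different endpoints of $e$, so they pair $u$ and $v$ up with the two endpoints of $e$ and are pairwise vertex-disjoint. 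Degenerate coincidences are harmless: if, say, $u$ already equals an endpoint of $e$, the corresponding path is the single vertex $u$ (a length-zero path), still disjoint from the other path by the same internal-disjointness argument.

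It remains to produce two internally-disjoint $s$-$t$ paths in $H$, which by Menger's theorem (applicable since $s$ and $t$ are non-adjacent) is equivalent to showing that no single vertex $a$ separates $s$ from $t$ in $H$. Suppose some $a$ did; then $a \notin \{s,t\}$. In $H - a$ the vertex $s$ retains a neighbour $p$ in $\{u,v\} \setminus \{a\}$ (nonempty because $u \ne v$) and $t$ retains a neighbour $q$ among the endpoints of $e$ other than $a$ (nonempty because $e$ has two distinct endpoints). Because $G$ is $2$-vertex-connected, $G - a$ is connected, so $p$ and $q$ lie in one connected component of $G - a \subseteq H - a$; hence $s$ and $t$ are still connected in $H - a$, a contradiction. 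Algorithmically, the two paths are obtained from a unit-capacity flow of value $2$ between $s$ and $t$ in the standard vertex-splitting network of $H$, and this is the construction that gets reused in the later algorithms (an open ear decomposition of $G$ starting at $e$ would be an alternative constructive route).

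I do not expect a deep difficulty here --- Menger's theorem carries the argument --- so the only thing that needs care is the bookkeeping around the degenerate cases in which $u$ or $v$ coincides with an endpoint of $e$ (and the trivial case $|V(G)| = 2$), making sure the output paths come out genuinely vertex-disjoint rather than merely internally disjoint; the remark above about length-zero paths being single vertices settles this.
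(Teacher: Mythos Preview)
Your reduction to Menger's theorem is correct: adding auxiliary source and sink vertices $s$ and $t$ and invoking the vertex form of Menger is a clean way to obtain the two disjoint paths, and your cut argument (that no single vertex $a\in V(G)$ can separate $s$ from $t$ because $G-a$ stays connected and each of $s,t$ keeps a surviving neighbor in $\{u,v\}$ and in the endpoints of $e$, respectively) is sound. The treatment of the degenerate cases where $u$ or $v$ coincides with an endpoint of $e$ is also fine.

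The paper, however, takes a different and deliberately constructive route. It acknowledges up front that the lemma is a special case of Menger's theorem but instead proves it by induction on an open ear decomposition $P_0=e,P_1,\dots,P_k$ of $G$: if neither $u$ nor $v$ is interior to the last ear $P_k$, induct on $P_0,\dots,P_{k-1}$; if one or both are interior to $P_k$, route them along $P_k$ to endpoints lying in earlier ears and then induct. The payoff is algorithmic: given the ear decomposition (computable in linear time), the case analysis directly yields the two paths in $O(n)$ time per query, and this is precisely the subroutine plugged into the subsequent $O(mn)$ construction of a rooted cycle basis. Your flow-based extraction would also work, but the paper threads the \emph{same} ear decomposition through both the existence proof and the basis-construction theorem, so your parenthetical remark has it backwards---the ear decomposition is not ``an alternative constructive route'' here but the primary one that the rest of the section relies on.
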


\begin{proof}
Let $P_0=e, P_1,\dots, P_k$ be an open ear decomposition of~$G$. We apply induction on~$k$, with the following cases:
\begin{itemize}
\item As a base case, if $k=0$, we have two length-zero paths, one for each endpoint.
\item If $k>0$ and neither $u$ nor $v$ is an interior vertex of $P_k$, the result follows by induction on the union of the ears up to $P_{k-1}$.
\item If $k>0$ and exactly one of $u$ or $v$ is an interior vertex of $P_k$, without loss of generality (by swapping $u$ and $v$ if necessary) we may assume that $u$ is the interior vertex.
At least one endpoint of $P_k$ is a vertex $w$ distinct from $v$.
By induction, $v$ and $w$ can be connected by vertex-disjoint paths to $e$, using only vertices in ears $P_0,\dots,P_{k-1}$. The result follows by augmenting the path from $w$ with the part of path $P_k$ from $u$ to $w$.
\item If $k>0$ and both $u$ and $v$ are interior vertices of $P_k$, then $u$ and $v$ have two disjoint paths within $P_k$ to the endpoints of $P_k$. By induction, the endpoints of $P_k$ can be connected by paths to $e$, using only vertices in ears $P_0,\dots,P_{k-1}$.  The result follows by concatenating these paths with the paths within $P_k$.
\end{itemize}
Thus, in all cases, the desired two paths exist.
\end{proof}

An ear with one edge cannot be part of a path constructed by this proof. So for a graph $G$ with $n$ vertices and $m$ edges and a known ear decomposition, we can discard the one-edge ears and transform the case analysis of the proof into an algorithm
that constructs the two desired paths in time $O(n)$.

\begin{theorem}
An undirected graph $G$ rooted at edge $e$ has a cycle basis that is rooted at $e$ if and only if $e$ belongs to the $2$-core of $G$ and the $2$-core is $2$-vertex-connected. When a rooted cycle basis exists, it can be constructed in time $O(mn)$ and the total length of the cycles in the basis is $O(mn)$.
\end{theorem}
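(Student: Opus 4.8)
The plan is to reduce immediately to the case that $G$ equals its own $2$-core. By the remark in the preliminaries, $G$ and its $2$-core $C$ have the same cycle space and hence the same cycle bases, and the stated right-hand condition (``$e$ in the $2$-core, which is $2$-connected'') says exactly that $C$ is $2$-connected and $e\in C$. So it suffices to prove (i) that no rooted cycle basis exists when $C$ fails this, and (ii) that when $C$ is $2$-connected and contains $e$ one can be built with total length $O(mn)$ in time $O(mn)$.

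For (i): if a rooted cycle basis exists then some basis cycle contains $e$, so $e$ lies on a cycle and therefore $e\in C$, and (same cycle bases) $C$ itself has a rooted cycle basis $\mathcal B$. Every cycle is connected, hence contained in a single block of $C$, so every member of $\mathcal B$ lies in the block $B$ of $C$ that contains $e$; since the cycle space of a graph is the direct sum of the cycle spaces of its blocks, $\operatorname{span}(\mathcal B)$ is contained in the cycle space of $B$. If $C$ is not $2$-connected I will exhibit a cycle $\gamma'$ whose edges all lie outside $B$: take a leaf block $B'\ne B$ of $C$'s block--cut forest (such a leaf block exists, since $C$ then has at least two blocks and leaves of a block--cut forest are blocks), and note $B'$ is not a single edge, because each of its vertices other than its cut vertex has all its $C$-edges inside $B'$ and, $C$ being a $2$-core, has degree $\ge 2$ there; hence $B'$ contains a cycle $\gamma'$. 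Since $\gamma'$ has edges disjoint from those of $B$, $\gamma'\notin\operatorname{span}(\mathcal B)$ --- a contradiction. So $C$ has a single block and, being a $2$-core, is $2$-connected.

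For (ii), let $C$ have $n'\le n$ vertices and $m'\le m$ edges, so $d:=m'-n'+1$ is the dimension of its cycle space. Build an open ear decomposition $P_0=e,P_1,\dots,P_d$ of $C$ in linear time (any edge may be taken as the first ear); it has exactly $d{+}1$ ears. For $i=1,\dots,d$ write $G_{i-1}=P_0\cup\dots\cup P_{i-1}$ (which is $2$-connected) and let $a_i\ne b_i$ be the endpoints of $P_i$, both lying in $G_{i-1}$. Apply \autoref{lem:2vc-paths} inside $G_{i-1}$, with its edge $e$ and the vertices $a_i,b_i$, to obtain vertex-disjoint paths $A_i$ from $a_i$ to one endpoint of $e$ and $B_i$ from $b_i$ to the other, using only edges of $G_{i-1}$ (for $i=1$ these paths are trivial). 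Set $\gamma_i:=E(P_i)\cup E(A_i)\cup\{e\}\cup E(B_i)$. Since the interior of $P_i$ avoids $V(G_{i-1})$ and $A_i,B_i$ are vertex-disjoint, $\gamma_i$ is a simple cycle; it contains $e$, so it is rooted; and $|\gamma_i|\le|P_i|+n'$. Using the algorithmic form of \autoref{lem:2vc-paths} (one-edge ears discarded), each $A_i,B_i$ is found in $O(n')$ time from the ear decomposition, so $\gamma_1,\dots,\gamma_d$ are produced in $O(m'+d\,n')=O(mn)$ time and have total length $\sum_i(|P_i|+n'+1)=O(m'+d\,n')=O(mn)$.

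The crux is that $\gamma_1,\dots,\gamma_d$ are linearly independent, hence (being $d$ in number) a cycle basis, and I expect this to be the main obstacle. The point of routing $A_i,B_i$ inside the already-built prefix $G_{i-1}$ is that $\gamma_i$ then uses only edges of $G_i=P_0\cup\dots\cup P_i$ while still containing every edge of the nonempty path $P_i$, whereas for $j<i$ the cycle $\gamma_j$ uses only edges of $G_j$, which contains no edge of $P_i$. So if some nonempty $\mathbb F_2$-combination of the $\gamma_i$ vanished, taking the largest index $i$ occurring in it would express $\gamma_i$ as a sum of $\gamma_j$'s with $j<i$; the left side contains an edge of $P_i$ and the right side contains none, a contradiction. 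A more naive construction --- merely producing, for each non-tree edge of a fixed spanning tree, some rooted cycle through that edge --- would not suffice, since a square $\mathbb F_2$-matrix with all diagonal entries $1$ need not be invertible; tying each cycle to a fresh ear is precisely what makes the change of basis from the ear cycles triangular.
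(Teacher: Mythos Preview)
Your proof is correct and, for the constructive direction (ii), essentially identical to the paper's: build an open ear decomposition starting at $e$, and for each ear $P_i$ invoke \autoref{lem:2vc-paths} inside the prefix $G_{i-1}$ to route two vertex-disjoint paths back to the endpoints of $e$; independence follows because each $\gamma_i$ contains an edge of $P_i$ that no earlier $\gamma_j$ does, and the $O(mn)$ time and length bounds are derived the same way.

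The one genuine difference is in the necessity direction (i). The paper argues directly that the $2$-core is $2$-vertex-connected: delete any vertex $v$; every remaining vertex $u$ of the $2$-core lies on some rooted basis cycle, and removing $v$ from that cycle still leaves a path from $u$ to an endpoint of $e$, so the $2$-core minus $v$ is connected. You instead use the block--cut forest: all rooted basis cycles live in the block $B$ containing $e$, so if there were a second block you could find a cycle in a leaf block $B'\ne B$ (which cannot be a bridge since its non-cut vertices have degree $\ge 2$ in the $2$-core) lying outside $\operatorname{span}(\mathcal B)$. Both arguments are short and standard; yours makes the obstruction (a cycle in another block) explicit, while the paper's avoids invoking the block decomposition and the direct-sum fact about cycle spaces of blocks. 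One small remark: your opening line ``some basis cycle contains $e$'' tacitly assumes the basis is nonempty, i.e., that $G$ is not a forest; this harmless degenerate case is glossed over in the paper as well.
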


\begin{proof}
If $G$ has a rooted cycle basis, its $2$-core must be $2$-vertex-connected. For, suppose that a vertex $v$ is deleted from the $2$-core. Every remaining vertex $u$ belongs to a basis cycle from which only $v$ can have been deleted, leaving a path connecting $u$ to the remaining endpoints of~$e$. In this way any two remaining vertices can be connected to each other via~$e$, so the remaining vertices are not disconnected.

In the other direction, suppose that the $2$-core of $G$ contains $e$ and is $2$-connected. Then it has an open ear decomposition $P_0=e, P_1,\dots, P_k$. We may form a set of cycles $C_1,C_2,\dots, C_k$ in which each cycle $C_i$ consists of $e$, the edges in $P_i$, and two paths through the union of ears $P_1,P_2,\dots P_{i-1}$ connecting the endpoints of $P_i$ to the endpoints of $e$. These cycles are independent because each one contains at least one edge in $P_i$ that does not belong to any previous cycle. As an independent set of cycles of the correct cardinality to be a basis, they must be a basis.

After computing the ear decomposition, each cycle takes time $O(n)$ to construct (by the remarks following lemma 1) and has length $O(n)$, giving the stated time and length bounds.
\end{proof}

\begin{figure}[t]
\centering\includegraphics[width=\textwidth]{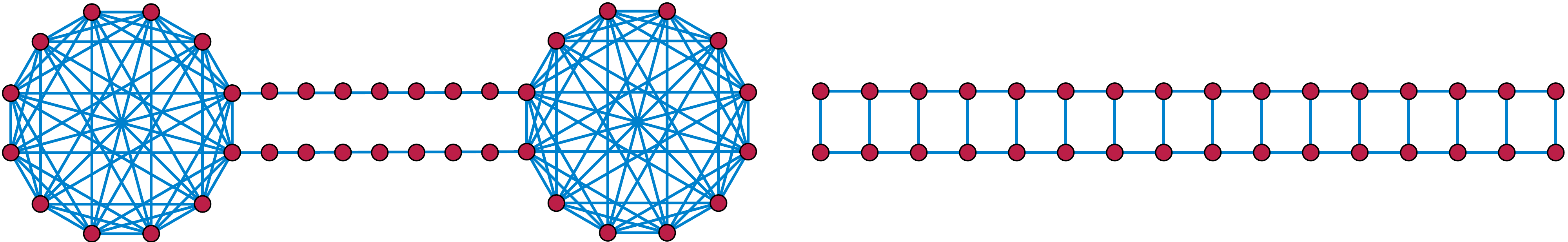}
\caption{Two graphs whose rooted cycle bases all have large total length: two cliques connected by two long paths (left), and a ladder graph (right).}
\label{fig:big-rooted-bases}
\end{figure}

The length and time  bounds of the theorem are tight in the worst case: for a graph consisting of two  $\Theta(m)$-vertex cliques connected by two $\Theta(n)$-vertex paths  (\autoref{fig:big-rooted-bases}, left), every cycle through $e$ and an edge in the farthest clique from $e$ has length $n$, so
every rooted cycle basis has total length $\Theta(mn)$. For linkage graphs with $m=\frac{3}{2}n-2$,  the time and length bounds become $O(n^2)$, which is again tight: every rooted cycle basis of an $n$-vertex ladder graph (\autoref{fig:big-rooted-bases}, right) has total length $\Theta(n^2)$.

In contrast, unrooted cycle bases may be significantly smaller. Every graph with $m$ vertices and $n$ edges has an (unrooted) cycle basis of total length $O(\min(n^2,m\log n))$, a bound that is close to tight because of the existence of sparse graphs of high girth for which every cycle basis has total length $\Omega(n\log n)$~\cite{ElkLieRiz-IPL-07,KavLieMeh-CSR-09}.

\section{Finding the minimum weight rooted cycle basis}

In this section we show how to find a rooted cycle basis of minimum total length in biconnected graphs with positive edge weights, in polynomial time.
We use a greedy algorithm that chooses one cycle at a time, and prove it correct by showing that the sequence of cycles selected by this algorithm correspond to an ear decomposition.
Our strategy is to show that an optimal basis  can be derived from an ear decomposition: the cycles of the basis can be sorted from shorter to longer cycles in such a way that, in each successive cycle, the edges that do not belong to earlier cycles form an ear.
Our algorithm performs the following steps:
\begin{enumerate}
\item Initialize what will eventually become a cycle basis to the empty set.
\item Use Suurballe's algorithm (reviewed in Appendix~\ref{sec:suurballe}) to compute, for each edge, the shortest rooted cycle through that edge.
\item While there exists an edge that is not included in any of the cycles chosen so far, select an edge that has not yet been included and whose computed shortest-cycle length is as small as possible, and add its cycle to the basis.
\end{enumerate}

\begin{figure}[b]
\centering\includegraphics[height=0.8in]{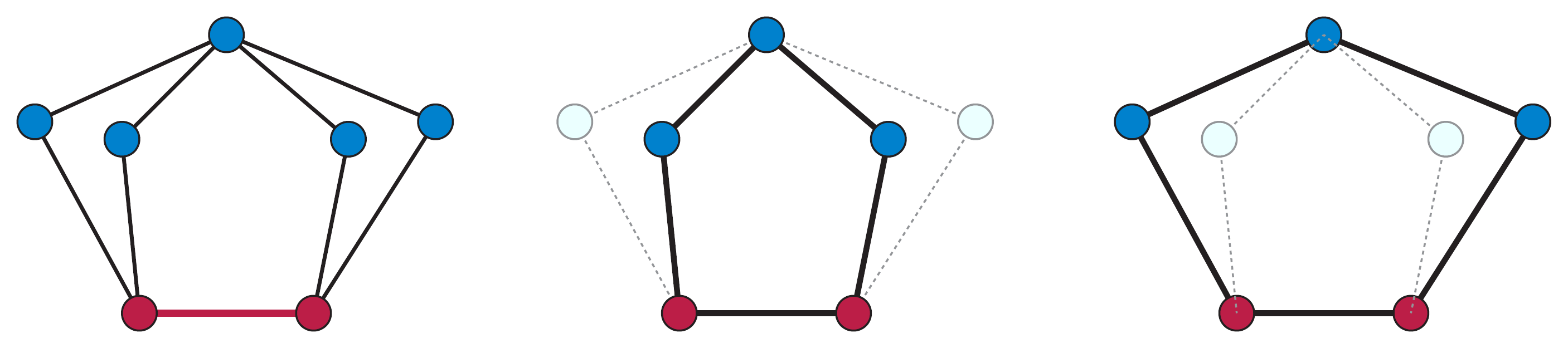}
\caption{An unweighted rooted graph (left) with two shortest rooted cycles that together cover the whole graph but do not generate its cycle  space (center and right). Our algorithm requires that no two cycles have equal weight, to prevent bad sets of cycles such as these from being chosen.}
\label{fig:bad-greedy}
\end{figure}

We will prove this algorithm correct under the additional assumption that no two paths, and no two cycles, have the same weight as each other. We say that a graph is \emph{unambiguously weighted} when this is the case. When paths and cycles can have equal weights, this algorithm can fail by choosing a set of cycles that together cover all edges but do not generate the whole cycle space (\autoref{fig:bad-greedy}), so we need a consistent tie-breaking rule in this case, which we describe in Appendix~\ref{sec:unambiguous}.

\subsection{Greedy cycle sequences}

We define a \emph{greedy cycle sequence} to be a sequence of cycles that could be produced by the algorithm described at the beginning of this section. That is, it is a sequence of rooted cycles $C_1$, $C_2$, $\dots$ in which
\begin{enumerate}
\item Each cycle includes an edge that is not in any earlier cycle in the sequence, and
\item Subject to constraint (1), each cycle is as short as possible.
\end{enumerate}

We will prove a sequence of lemmas about greedy cycle sequences, with the goal of showing that the set of new edges added by each cycle forms an ear and therefore that our greedy algorithm for rooted cycle bases is correct. To do so, it is helpful to have a notation for the subgraph of~$G$ formed by the vertices and edges in the first~$i$ cycles in the sequence. We call this subgraph the $i$th \emph{ambit} of the cycle basis,
\begin{wrapfigure}[20]{r}{1.65in}
\vspace{-3ex}
\centering
\includegraphics[scale=0.45]{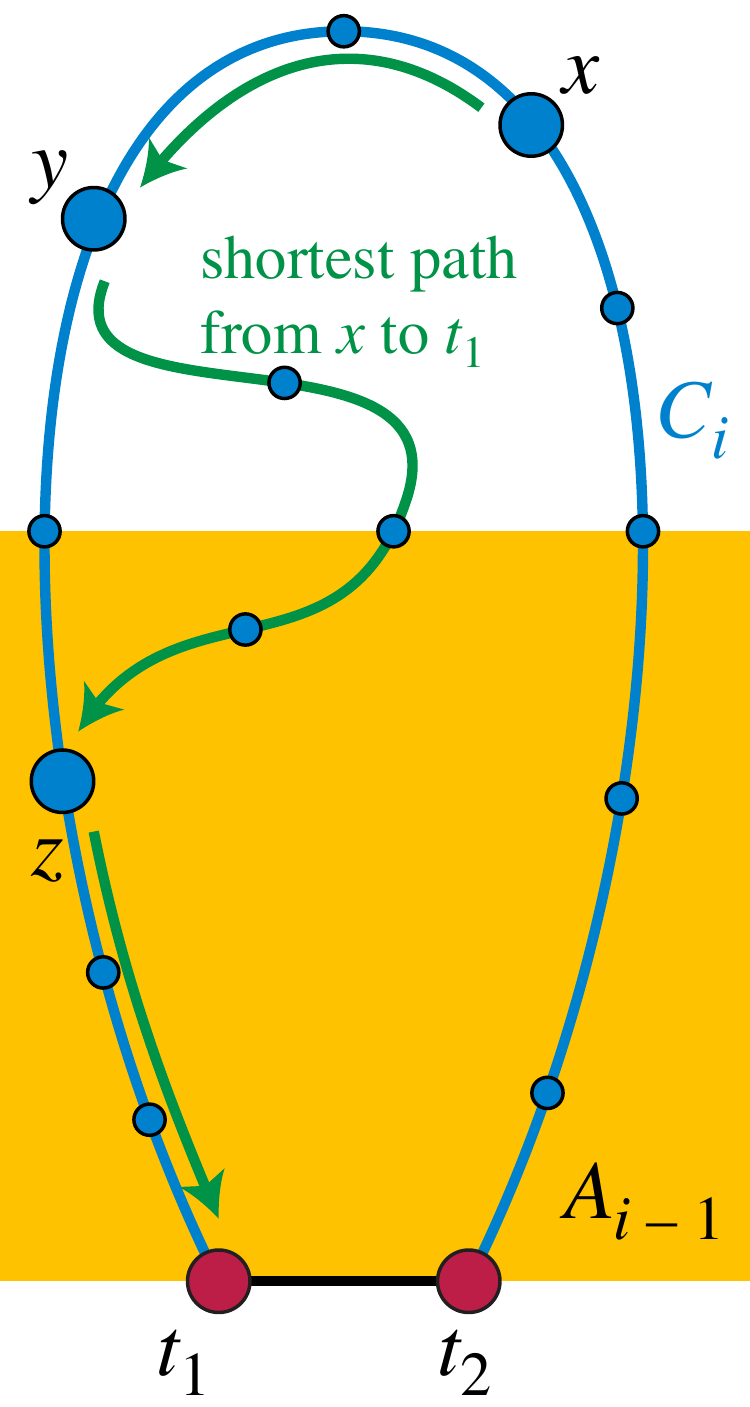}
\vspace{-1ex}
\caption{Notation for \autoref{lem:ambit-contains-path}. A shortest path is shown diverging from $C_i$ before reaching $A_{i-1}$, proven impossible by the lemma.}
\label{fig:ambit-contains-path}
\end{wrapfigure}
denoted $A_i$.

\begin{lemma}
\label{lem:ambit-contains-path}
Let $G$ be an unambiguously-weighted rooted graph, $A_i$ be the $i$th ambit of a greedy cycle sequence $C_1$, $C_2$, $\dots$ for $G$, and $x$ be a point in $A_i$ (a vertex or a point interior to an edge). Then $A_i$ contains the shortest path in $G$ from $x$ to each endpoint of the root edge of $G$.
\end{lemma}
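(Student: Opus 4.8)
The plan is to argue by induction on $i$, after some reductions. Write the root edge as $e=st$; by symmetry it suffices to show that $A_i$ contains the shortest $x$--$s$ path $\pi$, and since a shortest path from $x$ to $s$ that uses $e$ is necessarily a shortest $x$--$t$ path followed by $e$, that case reduces to the statement for $t$ (and, as one checks, at most one of the two shortest paths from $x$ can use $e$), so we may assume $e\notin\pi$. It is also enough to take $x$ a vertex: if $x$ is interior to an edge $ab$ of $A_i$ then $a$, $b$, and $ab$ lie in $A_i$, and the shortest $x$--$s$ path extends one of the shortest paths from $a$ or from $b$ by a sub-segment of $ab$. If $x\in A_{i-1}$ the claim is immediate from the inductive hypothesis, so assume $x$ is a vertex on the new cycle $C_i$ that is not in $A_{i-1}$; the base case $i=1$ is the instance $A_0=\emptyset$ of this situation.

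Let $P$ be the arc of $C_i$ from $x$ to $s$ avoiding $e$ and $Q$ the arc from $x$ to $t$ avoiding $e$, so that $C_i=P\cup\{e\}\cup Q$. If $\pi=P$ we are done, since $P\subseteq C_i\subseteq A_i$. Otherwise $\pi$ and $P$ are two distinct $x$--$s$ paths, so the unambiguous-weighting hypothesis gives $w(\pi)<w(P)$. Splicing $\pi$ into $C_i$ in place of $P$ produces the closed walk $W=\pi\cup\{e\}\cup Q$, which passes through $e$ and has weight $w(\pi)+w(e)+w(Q)<w(C_i)$. Since $e$ lies in neither $\pi$ nor $Q$, it occurs with odd multiplicity in $W$, so when the edge set of $W$ is written over $\mathbb{F}_2$ as a union of edge-disjoint simple cycles, one of these cycles contains $e$; being edge-disjoint from the rest and supported on the edges of $W$, it is a rooted cycle of weight at most $w(W)<w(C_i)$. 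For $i=1$ this contradicts the minimality of $C_1$ among all rooted cycles, and we are done.

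For $i\ge 2$ we need the extracted rooted cycle to contain an edge outside $A_{i-1}$, because only then does the greedy rule guarantee that such a cycle would have been chosen strictly before $C_i$ -- the contradiction we want. Both edges of $C_i$ incident to $x$ are outside $A_{i-1}$ (otherwise $x\in A_{i-1}$); one of them, call it $g$, is the first edge of $Q$ and hence lies in $W$. Regarding $W$ as a closed walk based at $x$ whose last step is $g=xx'$, one extracts from $W$ with $g$ deleted a simple $x$--$x'$ path together with some edge-disjoint detached cycles, and then re-attaches $g$; the crux is to arrange that $e$ survives onto this $x$--$x'$ path rather than onto one of the detached cycles, for then closing up with $g$ gives a simple rooted cycle through the new edge $g$ of weight less than $w(C_i)$. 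When $e$ instead lands on a detached cycle, $\pi$ must meet $C_i$ again at some vertex $z$ through which the base point can be relocated: if $z\in A_{i-1}$ then $\pi[z,s]\subseteq A_{i-1}$ by the inductive hypothesis while $\pi[x,z]\subseteq C_i$, so $\pi\subseteq A_i$; and if $z\notin A_{i-1}$ the whole argument repeats at $z$ with a strictly shorter arc of $C_i$ to $s$, so the recursion terminates. This intersection bookkeeping -- ensuring the shortest path $\pi$ does not interact with $C_i$ in a way that destroys either the root edge or a new edge in the extracted cycle -- is the only real obstacle; the rest is the splicing and cycle-space counting above.
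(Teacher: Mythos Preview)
Your induction setup, the reductions to a vertex $x\in C_i\setminus A_{i-1}$ with $e\notin\pi$, and the base case $i=1$ are all fine; the splicing idea (replace the arc $P$ by $\pi$ to obtain a shorter closed walk $W$ through $e$) is the right instinct. The gap is in the last paragraph, where you try to force the simple cycle extracted from $W$ to carry both $e$ and a new edge. You define $z$ only as ``some vertex where $\pi$ meets $C_i$ again'' (produced when $e$ falls onto a detached cycle of the extraction) and then assert $\pi[x,z]\subseteq C_i$. That does not follow: nothing about this $z$ prevents $\pi$ from having already departed $C_i$ before reaching $z$. The recursion ``repeat at $z$ with a strictly shorter arc of $C_i$ to $s$'' is likewise ill-specified---$z$ may lie on the $Q$-side of $C_i$, the choice of $z$ is not tied to any progress measure, and you have not shown that relocating the base point actually keeps $e$ on the extracted $x$--$x'$ path the next time around. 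As written, the bookkeeping you flag as ``the only real obstacle'' is not carried out.

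The paper avoids the global splice-and-extract and argues locally instead. Follow $\pi$ from $x$; if it ever leaves $C_i$ at a vertex $y\notin A_{i-1}$, let $z$ be the first point at which $\pi$ returns to $C_i$. The segment $\pi[y,z]$ is a shortest $y$--$z$ path distinct from either arc of $C_i$ between $y$ and $z$, hence (by unambiguous weighting) strictly shorter than both. Replacing the $y$--$z$ arc of $C_i$ that does \emph{not} contain $e$ by $\pi[y,z]$ yields a \emph{simple} rooted cycle strictly shorter than $C_i$ that still passes through $y\notin A_{i-1}$, and therefore through edges outside $A_{i-1}$; this contradicts the greedy choice of $C_i$ directly, with no cycle-extraction needed. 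If no such $y$ exists then $\pi$ stays inside $C_i$ until it first meets $A_{i-1}$, and the inductive hypothesis applied at that meeting point finishes the argument. This local replacement is exactly the step your recursion is groping toward, but with $y$ chosen as the \emph{departure} point rather than a re-entry point.
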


\begin{proof}
Let $t_1$ and $t_2$ be the endpoints of the root edge. We will show by induction on~$i$ that $A_i$ contains the shortest path from $x$ to $t_1$; by symmetry it also contains a path to~$t_2$.
We may assume that $x$ does not belong to $A_{i-1}$, for otherwise the shortest path is already contained in $A_{i-1}$ by induction.

Then let $P$ be the shortest path in $G$ from $x$ to $t_1$; we claim that $P$ must remain within $C_i$ until it reaches a vertex of $A_{i-1}$. For, if $P$ deviated from $C_i$ at some vertex $y$ outside of $A_{i-1}$, let $z$ be the first point at which $P$ returns to a vertex of $C_i$; $z$ must exist, because $P$ eventually reaches $t_1$, which belongs to $C_i$. In this case the rooted cycle formed from $C_i$ by removing the path in $C_i$ from $y$ to $z$ and replacing it with the part of $P$ from $y$ to $z$ would be strictly shorter than $C_i$ (because the part of $P$ from $y$ to $z$ is a shortest path and no two paths have equal length) and would contain the vertex $y$ outside $A_{i-1}$, contradicting the greedy choice of $C_i$ as the shortest rooted cycle not contained in $A_{i-1}$.

Therefore $A_i$ contains the portion of $P$ from $x$ to $A_{i-1}$, and by induction it contains as well the rest of the shortest path from $x$ to $t_1$.
\end{proof}

\subsection{Rungs of the Suurballe ladder}

Let $P_1$ and $P_2$ be two disjoint paths from $s$ (an arbitrary vertex in the given graph $G$) to $t_1$ and $t_2$ (the endpoints of the root edge of $G$), as constructed by Suurballe's algorithm.
Recall that this algorithm constructs two different paths from $s$ to $t_1$ and $t_2$, the first of which is a shortest path in $G$ and the second of which is the shortest path in a derived graph $H$. The union of these two paths differs from $P_1$ and $P_2$ by a collection of paths that we call \emph{rungs}. Each rung is traversed in one direction by the shortest path in $G$ and in the opposite direction by the shortest path in $H$. The endpoints of the rungs lie on $P_1$ and $P_2$ (in the same order on both paths) and each of the two shortest paths is formed by following one of $P_1$ or $P_2$ until reaching the endpoint of a rung, then traversing that rung and continuing in the same way along $P_2$ or $P_1$ until the next rung, etc.
\begin{wrapfigure}[20]{r}{1.65in}
\vspace{-2ex}
\centering
\includegraphics[scale=0.4]{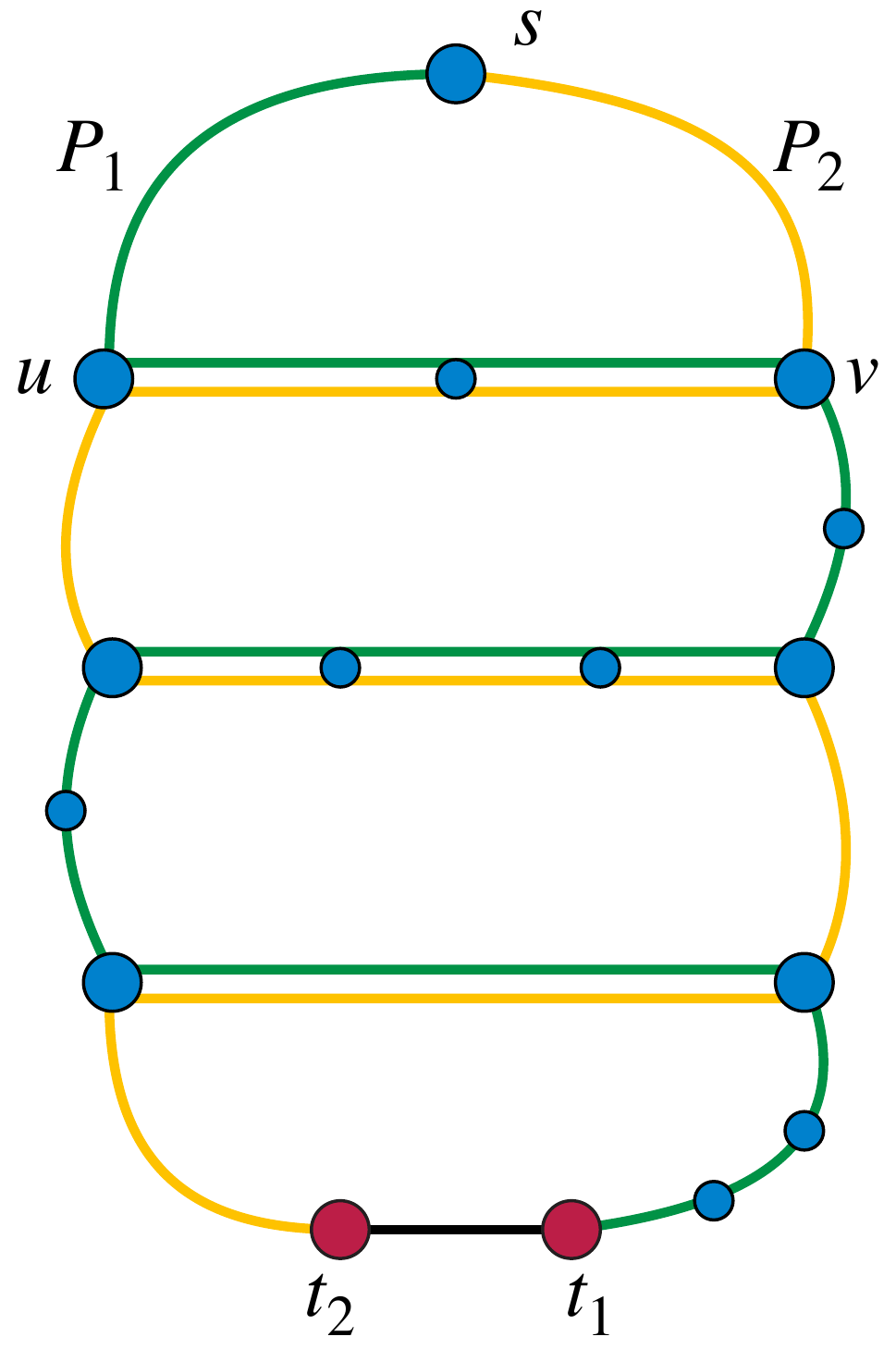}
\vspace{-1ex}
\caption{Notation for \autoref{lem:remove-rung}.  $P_1$ and $P_2$ are shown in contrasting colors; the horizontal segments with both colors are the rungs.}
\label{fig:rungs}
\end{wrapfigure}

\begin{lemma}
\label{lem:remove-rung}
With $s$, $P_1$, and $P_2$ as above, let $C$ be the cycle formed by $P_1$, $P_2$, and the root edge, and let  $R$ be the rung of $P_1$ and $P_2$ closest to $s$. Let $D$ be the cycle formed by removing the parts of $P_1$ and $P_2$  from $s$ to the endpoints of $R$, and replacing them with $R$.
Then $C$ is longer than $D$.
\end{lemma}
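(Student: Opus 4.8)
The plan is to compute the difference $|C|-|D|$ exactly in terms of $P_1$, $P_2$, and $R$, and then to bound the weight of $R$ using the fact that one of the two paths produced by Suurballe's algorithm is a genuine shortest path in $G$. Write $e=t_1t_2$ for the root edge, and let $p$ and $q$ be the two endpoints of the rung $R$, with $p$ on $P_1$ and $q$ on $P_2$. Since $P_1$ and $P_2$ meet only at $s$ and $R$ is internally disjoint from both, the cycle $D$ is exactly $P_1[p,t_1]\cup\{e\}\cup P_2[q,t_2]\cup R$. Splitting $P_1$ at $p$ and $P_2$ at $q$ and cancelling the common edge $e$ gives the identity
\[
|C|-|D| \;=\; |P_1[s,p]| + |P_2[s,q]| - |R|,
\]
where $|\cdot|$ denotes total weight; so it suffices to prove $|R| < |P_1[s,p]| + |P_2[s,q]|$.

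For this, I would use the structural description of Suurballe's output recalled above: the shortest path $A$ in $G$ is formed by starting at $s$, following $P_1$ or $P_2$ until the first rung endpoint, crossing that rung, then following the other of $P_1$ and $P_2$ to the next rung endpoint, and so on. Because $R$ is the rung closest to $s$ and the rungs occur in the same order along both paths, its endpoints $p$ and $q$ are the first rung endpoints met on $P_1$ and on $P_2$ respectively, so the first rung that $A$ crosses is $R$. After swapping the two halves of the picture if necessary (the statement of the lemma is symmetric under interchanging $(P_1,t_1)$ with $(P_2,t_2)$), I may assume that $A$ begins by following $P_2$, so that the prefix of $A$ running from $s$ up to $p$ is exactly the concatenation of $P_2[s,q]$ with $R$; note that $p\neq s$, since otherwise $A$ would return to $s$ and fail to be a path. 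As $A$ is a shortest path and all weights are positive, every prefix of $A$ is itself a shortest path between its endpoints, so this concatenation is a shortest path from $s$ to $p$. But $P_1[s,p]$ is another path from $s$ to $p$, and it differs from the concatenation because the latter contains at least one edge of $R$, an edge lying on neither $P_1$ nor $P_2$. Hence, by unambiguous weighting, $|P_2[s,q]| + |R| < |P_1[s,p]|$.

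Substituting this bound into the identity above yields
\[
|C|-|D| \;=\; |P_1[s,p]| + |P_2[s,q]| - |R| \;>\; 2\,|P_2[s,q]| \;\ge\; 0,
\]
so $C$ is strictly longer than $D$, as desired. The part I expect to require the most care is the structural claim in the second paragraph: that reading the shortest path $A$ from $s$ up to the far endpoint of the rung closest to $s$ yields exactly a prefix of one of $P_1$, $P_2$ followed by $R$. This relies on knowing that the rungs appear in the same order along the two paths and on correctly identifying at which endpoint of $R$ the path $A$ arrives. Everything else---the length bookkeeping, the fact that prefixes of shortest paths are shortest paths, and the tie-break provided by the unambiguous-weighting hypothesis---is routine.
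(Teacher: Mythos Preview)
Your proof is correct and follows essentially the same route as the paper's: compute $|C|-|D|=|P_1[s,p]|+|P_2[s,q]|-|R|$, then bound $|R|$ by observing that the first Suurballe path is a genuine shortest path whose prefix up to the far endpoint of the closest rung equals one of $P_i[s,\cdot]$ concatenated with $R$. The only cosmetic difference is that you obtain strictness from the unambiguous-weighting hypothesis, whereas the paper uses the non-strict inequality $\ell_{su}+\ell_{uv}\le \ell_{sv}$ and then invokes positive edge weights to make $2\ell_{su}>0$; both are ambient assumptions in this section, so either works.
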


\begin{proof}
Let $u$ and $v$ be the endpoints of $R$, let $\ell_{su}$ denote the length of the path in $P_1$ from $s$ to $u$, let $\ell_{sv}$ denote the length of the path in $P_2$ from $s$ to $v$, and let $\ell_{uv}$ denote the length of the rung. $P_1$ follows $C$ from $s$ to $u$  then crosses rung~$R$; by  construction, it is the shortest path in $G$ from $s$ to $t_1$. Thus, $\ell_{su}+\ell_{uv}\le \ell_{sv}$, for if not then $P_1$ couldn't be a shortest path. Equivalently, adding $\ell_{su}-\ell_{uv}$ to both sides of the inequality gives $2\ell_{su}\le \ell_{su}+\ell_{sv}-\ell_{uv}$.
But the left hand side of this inequality is positive (by the assumption that the input graph has positive edge weights) and the right hand side is the difference in weights between $C$ and $D$.
\end{proof}

\begin{corollary}
\label{cor:rung-removal}
Let $C_i$ be a cycle in the greedy cycle sequence, and let $A_{i-1}$ be the ambit of the previous cycle. Suppose that $C_i$ is constructed by applying Suurballe's algorithm from a starting vertex $s$, and suppose that the two disjoint paths $P_1$ and $P_2$ comprising $C_i$ have a nonempty sequence of rungs. Then these rungs, and all parts of $P_1$ and $P_2$ from the first rung endpoint to $t_1$ and $t_2$, belong to $A_{i-1}$.
\end{corollary}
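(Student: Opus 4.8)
The plan is to turn \autoref{lem:remove-rung} into a containment statement by exploiting the greedy optimality of $C_i$, and then to use \autoref{lem:ambit-contains-path} to sweep up the rungs that the first step does not directly reach. First I would record the greedy property that drives everything: because each cycle of a greedy cycle sequence is a shortest rooted cycle through some edge not yet covered, and every rooted cycle that is not contained in $A_{i-1}$ uses such an edge, $C_i$ is a shortest rooted cycle among all rooted cycles not contained in $A_{i-1}$ (this is exactly the fact already used in the proof of \autoref{lem:ambit-contains-path}). Now let $R_1$ be the rung closest to $s$, with endpoints $u_1\in P_1$ and $v_1\in P_2$, and let $D$ be the rooted cycle obtained from $C_i$ as in \autoref{lem:remove-rung}, by replacing the pieces of $P_1$ and $P_2$ between $s$ and the endpoints of $R_1$ with $R_1$ itself. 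By \autoref{lem:remove-rung}, $D$ is a rooted cycle strictly shorter than $C_i$, so by the greedy property $D$ must be contained in $A_{i-1}$. Since $D$ is exactly the union of $R_1$, the part of $P_1$ from $u_1$ to $t_1$, the part of $P_2$ from $v_1$ to $t_2$, and the root edge, this already proves the corollary for the first rung and for both tails of $P_1$ and $P_2$; in particular, $u_1,v_1\in A_{i-1}$.

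It remains to deal with the later rungs $R_2,\dots,R_\ell$, and here I would invoke the structure of Suurballe's output. One of its two paths, say the one ending at the endpoint $\tau$ of the root edge, is a genuine shortest path $Q$ of $G$, and $Q$ is assembled by following arcs of $P_1$ or $P_2$ and crossing rungs, crossing every rung exactly once. Because $R_1$ is the rung nearest $s$, its endpoints $u_1,v_1$ are the first rung endpoints encountered along $P_1$ and along $P_2$, so $R_1$ is the first rung that $Q$ crosses; let $w\in\{u_1,v_1\}$ be the endpoint at which $Q$ emerges from $R_1$, so $w\in A_{i-1}$ by the previous paragraph. The portion of $Q$ from $w$ to $\tau$ is a sub-path of a shortest path, hence, since $G$ is unambiguously weighted, it is \emph{the} shortest path in $G$ from $w$ to $\tau$; therefore, by \autoref{lem:ambit-contains-path} applied with ambit $A_{i-1}$ and $x=w$, it lies inside $A_{i-1}$. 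Since $Q$ crosses all of $R_1,\dots,R_\ell$ and $R_1$ is the only rung it crosses before reaching $w$, the portion from $w$ to $\tau$ contains $R_2,\dots,R_\ell$, so all the remaining rungs lie in $A_{i-1}$ as well, which finishes the proof.

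The step I expect to be the main obstacle is the second paragraph: making rigorous the claim that the shortest path $Q$ returned by Suurballe's algorithm threads through every rung and meets $R_1$ before any other rung, so that a single appeal to \autoref{lem:ambit-contains-path} from the far end of $R_1$ captures all later rungs at once. This requires the bookkeeping of Suurballe's two output paths and their decomposition into $P_1$, $P_2$, and the rungs to be spelled out carefully; granting that, the rest is just chaining \autoref{lem:remove-rung}, the greedy optimality of $C_i$, and \autoref{lem:ambit-contains-path}. A small point to verify en route is that $D$ really is a simple rooted cycle strictly shorter than $C_i$ --- needed so that the greedy optimality of $C_i$ applies to it --- which is precisely what \autoref{lem:remove-rung} provides.
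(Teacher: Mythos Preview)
Your argument is correct, and it takes a genuinely different route from the paper's. The paper's proof is a two-line contrapositive: for \emph{each} rung $R_j$ it forms the cycle $D_j$ obtained by clipping $P_1,P_2$ at $R_j$'s endpoints and splicing in $R_j$, asserts that every such $D_j$ is shorter than $C_i$, and concludes that if any rung or tail left $A_{i-1}$ then some $D_j$ would beat $C_i$ in the greedy choice. This is terse and uniform, but it silently extends \autoref{lem:remove-rung} (stated only for the rung nearest $s$) to all later rungs; the extension is true, but one has to redo the shortest-path inequality to see it.

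Your approach instead applies \autoref{lem:remove-rung} exactly once, as stated, to place $R_1$ and both tails inside $A_{i-1}$, and then recovers $R_2,\dots,R_\ell$ by observing that the genuine shortest path $Q$ produced by Suurballe's algorithm traverses every rung, meets $R_1$ first, and thereafter is (by unambiguous weighting) \emph{the} shortest path from a point already known to lie in $A_{i-1}$---so \autoref{lem:ambit-contains-path} swallows the rest. The bookkeeping you flag as the main obstacle is fine: the rung endpoints occur in the same order along $P_1$ and $P_2$, so whichever of the two $Q$ starts along, the first rung it meets is $R_1$, and the exit point $w$ is one of $u_1,v_1\in A_{i-1}$. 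Your version trades a small extension of \autoref{lem:remove-rung} for a second appeal to \autoref{lem:ambit-contains-path}; both are short, but yours has the virtue of using the two lemmas exactly as proved.
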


\begin{proof}
Otherwise the cycle $D$ in the statement of \autoref{lem:remove-rung}, or one of the other cycles constructed in the same way from one of the other rungs, would be a shorter cycle containing at least one edge that is not in $A_{i-1}$, and would have been selected in place of $C_i$ in the greedy cycle sequence.
\end{proof}

\subsection{From cycle sequences to ear decompositions}

As we now show, a greedy cycle sequence with cycles $C_1$, $C_2$, $\dots$ and ambits $A_1$, $A_2$, $\dots$ may be used to derive an ear decomposition, in which the first ear $P_0$ is the root edge and each subsequent ear $P_i$ is the subgraph $A_i\setminus A_{i-1}$. That is, for each $i$, this subgraph is a single path.

\begin{figure}[t]
\centering\includegraphics[scale=0.4]{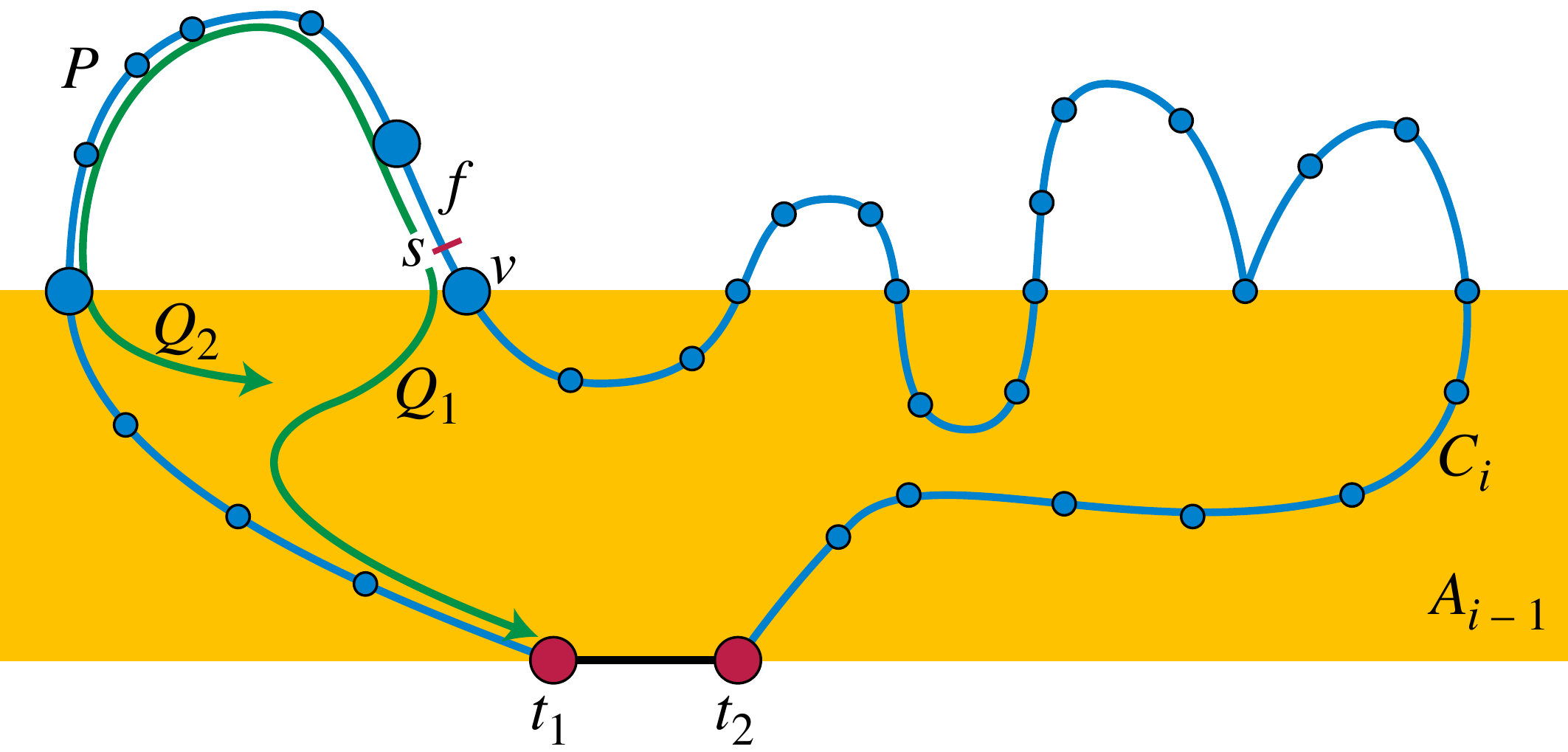}
\caption{Notation for \autoref{lem:cycles2ears}. The figure shows a cycle $C_i$ such that $C_i\setminus A_{i-1}$ forms more than one path, a configuration that is proven impossible by the lemma.}
\label{fig:one-ear}
\end{figure}

\begin{lemma}
\label{lem:cycles2ears}
Let $C_i$ and $A_i$ be a greedy cycle sequence and the corresponding sequence of ambits in an unambiguously-weighted graph. Then $A_i\setminus A_{i-1}$ forms a single connected path in the given graph~$G$.
\end{lemma}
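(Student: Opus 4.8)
The goal is to show that $A_i\setminus A_{i-1}=C_i\setminus A_{i-1}$ is a single path. Since $C_i$ is a simple cycle, this set is automatically a disjoint union of subpaths of $C_i$, so it suffices to show there is only one of them. Fix an edge $f_i$ of $C_i$ lying outside $A_{i-1}$ (one exists by the definition of a greedy cycle sequence); by the greedy choice $C_i$ is then the shortest rooted cycle through $f_i$. Subdividing $f_i$ at a new vertex $s$ and running Suurballe's algorithm, we may write $C_i=P_1\cup P_2\cup\{e\}$, where $e$ is the root edge and $P_1,P_2$ are the two internally vertex-disjoint paths from $s$ to the root endpoints $t_1,t_2$. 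Note $s\notin A_{i-1}$, while $e\in A_{i-1}$ for $i\ge 2$ (every cycle of the sequence is rooted; the case $i=1$ is handled directly). Thus $C_i\setminus A_{i-1}$ certainly contains the subpath through $s$, and any \emph{other} of its subpaths lies entirely inside $P_1$ or entirely inside $P_2$, since it contains neither $s$ nor $e$; so the whole task is to rule out a second subpath.

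The plan is to prove a \emph{monotone absorption} property: travelling along $P_1$, and along $P_2$, starting from $s$, once the path first reaches a vertex of $A_{i-1}$ it never leaves $A_{i-1}$ again. Granting this for both paths, $C_i\cap A_{i-1}$ is precisely the arc running from the first $A_{i-1}$-vertex $w$ of $P_1$ along $P_1$ to $t_1$, across $e$ to $t_2$, and back along $P_2$ to its first $A_{i-1}$-vertex $w'$; this is a genuine path because $P_1$ and $P_2$ are internally disjoint, and consequently $C_i\setminus A_{i-1}$ is the single complementary arc through $s$, as required. For $P_1$ the absorption property follows by combining three ingredients: \autoref{cor:rung-removal}, which (when rungs are present) already confines the rungs and the whole post-first-rung portion of $P_1$ to $A_{i-1}$; the fact that the surviving initial segment of $P_1$ is a genuine shortest path of $G$ (an initial part of Suurballe's first path, which is a shortest path of $G$, or of $P_1$ itself when there are no rungs); and \autoref{lem:ambit-contains-path} together with uniqueness of shortest paths in an unambiguously-weighted graph, which forces the suffix of such a shortest path from any of its $A_{i-1}$-vertices to stay inside $A_{i-1}$.

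The step I expect to be the main obstacle is monotone absorption for $P_2$. Suurballe's algorithm certifies only its \emph{first} output path to be a shortest path of $G$; the second path is a shortest path merely in the derived graph $H$, so the clean ``a suffix of a shortest path is a shortest path'' reasoning does not transfer verbatim, and bridging from shortest-in-$H$ to a statement governed by \autoref{lem:ambit-contains-path} (which controls only shortest paths to $t_1$ and $t_2$) is the delicate part. Here \autoref{cor:rung-removal} is again the decisive lever: it reduces the problem to the segment of $P_2$ up to its first rung endpoint, on which a reduced-cost comparison in $H$ recovers that this segment is a genuine shortest path of $G$; a short additional exchange argument may still be needed to handle the portion between $P_2$'s first $A_{i-1}$-vertex and that rung endpoint, and throughout I will have to check that any rerouting substituted into $C_i$ remains internally disjoint from the rest of the cycle, so that the exchanges really produce simple rooted cycles and hence contradict the greedy (and, by unambiguous weighting, unique) choice of $C_i$. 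Once monotone absorption holds for $P_2$ as well, the description of $C_i\cap A_{i-1}$ above completes the proof.
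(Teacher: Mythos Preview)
Your overall plan is the paper's plan: run Suurballe from a point $s$ on an edge of $C_i\setminus A_{i-1}$ and argue monotone absorption for each of the two resulting paths, using \autoref{lem:ambit-contains-path} and \autoref{cor:rung-removal}. The difference, and it is exactly the step you flagged as the main obstacle, is that the paper does not take the edge $f_i$ arbitrarily. It makes an \emph{extremal} choice that dissolves the $P_2$ difficulty entirely, without any analysis in the derived graph $H$.

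Concretely, the paper lets $P$ be the connected component of $C_i\setminus A_{i-1}$ closest to $t_1$ along $C_i$, takes $v$ to be the endpoint of $P$ farther from $t_1$ (so $v\in A_{i-1}$), and places $s$ on the last edge $f$ of $P$, close enough to $v$ that the genuine shortest $s$--$t_1$ path begins with the step $s\to v$. This buys two things at once. First, Suurballe's first path $Q_1$ enters $A_{i-1}$ at its second vertex $v$, so \autoref{lem:ambit-contains-path} handles all of $Q_1$ immediately; there is no need to worry about which arc of $C_i$ the first Suurballe path starts on (a point your write-up silently assumes). Second, Suurballe's second path $Q_2$ begins along the $s$--to--$t_1$ arc of $C_i$ through the remainder of $P$; but by the extremal choice of $P$ there is no component of $C_i\setminus A_{i-1}$ between $P$ and $t_1$, so once $Q_2$ exits $P$ it is already in $A_{i-1}$ all the way to the first rung endpoint, and \autoref{cor:rung-removal} covers it thereafter. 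Thus the ``$P_2$'' side is handled by a one-line combinatorial consequence of choosing $P$ closest to $t_1$, not by any reduced-cost argument in $H$. Your proposed route through $H$ is not obviously wrong, but it is not worked out, and the disjointness checks you mention are real obstacles; the paper's extremal placement of $s$ is the missing idea that makes the whole argument go through cleanly.
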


\begin{proof}
Let $t_1$ and $t_2$ be the endpoints of the root edge.
As with any rooted cycle not  contained in $A_{i-1}$, $C_i$ forms one or more connected paths in $A_i\setminus A_{i-1}$, separated by vertices or edges of $A_{i-1}$. Let $P$ be the closest to $t_1$ of these paths according to their ordering along $C_i$, and let $f$ and $v$ be the farthest edge and vertex from $t_1$ in the path ordering of~$P$. Then $v$ is an endpoint of $f$ and of path $P$, and belongs to $A_{i-1}$. By \autoref{lem:ambit-contains-path} the shortest path from $v$ to $t_1$ stays within $A_{i-1}$ and therefore does not use edge~$f$.

Let $s$ be a point on edge $f$, sufficiently close to $v$ that the shortest path from $s$ to~$t_1$ passes through $v$. (In other words, subdivide the edge, and place a vertex at this point.) If we apply Suurballe's algorithm starting from the point $s$, it will find two paths $Q_1$ (the shortest path from $s$ to $t_1$) and $Q_2$. The symmetric difference of these two paths is $C_i$ again, because $C_i$ is the unique shortest rooted cycle containing~$s$. $Q_1$ passes from $s$ through $v$ and then stays within $A_{i-1}$ by \autoref{lem:ambit-contains-path}. $Q_2$ follows the rest of $P$, then stays within $A_{i-1}$ until it reaches the endpoint of the first rung of $Q_1\cap Q_2$ (by the choice of $P$ as the first of the paths in $A_i\setminus A_{i-1}$ in cycle order). After reaching this rung endpoint, $Q_2$ continues to stay within $A_{i-1}$ by \autoref{cor:rung-removal}. Thus, neither $Q_1$ nor $Q_2$ can escape $A_{i-1}$ once they enter it, so the cycle $C_i$ that they form can only have the single component $P$ outside of~$A_{i-1}$.
\end{proof}

\begin{corollary}
\label{cor:ear}
Let $C_i$ be the cycles of a greedy cycle sequence for the rooted graph $G$, and let $A_i$ be the corresponding ambits. Let $P_0$ be the one-edge path formed by the root edge of $G$, and for $i>0$ let $P_i$ be the path $A_i\setminus A_{i-1}$. Then the sequence of paths
$P_0,P_1,\dots P_i$ is an ear decomposition of the subgraph $A_i$, and the sequence of all paths formed in this way is an ear decomposition of~$G$.
\end{corollary}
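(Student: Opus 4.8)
The plan is to unwind the definition of an open ear decomposition and check its three defining properties, both for each prefix $P_0,\dots,P_i$ and for the whole sequence. First I would record the bookkeeping identity $A_i=P_0\cup P_1\cup\dots\cup P_i$, valid for every $i\ge 0$ once we read $A_0$ as the subgraph consisting of the root edge alone (so that $A_0=P_0$). This follows by induction: $A_i=A_{i-1}\cup C_i$, because $A_i$ is the union of the first $i$ cycles, and $P_i=A_i\setminus A_{i-1}$ by definition, so $A_i=A_{i-1}\cup P_i=(P_0\cup\dots\cup P_{i-1})\cup P_i$. In particular, for any vertex, ``appears in an earlier ear'' is synonymous with ``lies in $A_{i-1}$''.

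With that identity the first two ear properties are essentially free. Property~(1), that $P_0$ is a single edge, holds by construction. That each $P_i$ with $i>0$ is a simple path is exactly \autoref{lem:cycles2ears}. For property~(2), observe that $P_i=A_i\setminus A_{i-1}=C_i\setminus A_{i-1}$ consists of the edges of the cycle $C_i$ that are not yet present, while $C_i$ also contains at least one already-present edge, namely the root edge $e\in A_0\subseteq A_{i-1}$. Hence $P_i$ is a proper sub-arc of the cycle $C_i$, so its two endpoints are precisely the vertices at which $C_i$ switches between a new edge and an old edge; each such vertex is incident to an old edge and therefore lies in $A_{i-1}$.

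The one point that is not pure bookkeeping, and that I expect to be the main obstacle, is property~(3): no interior vertex $w$ of $P_i$ lies in $A_{i-1}$. In fact this is already obtained in the proof of \autoref{lem:cycles2ears}, where the two paths produced by Suurballe's algorithm are shown never to leave $A_{i-1}$ once they enter it, forcing the single new component of $C_i$ to meet $A_{i-1}$ only at the two endpoints of $P_i$; so the corollary may simply invoke that argument. To verify property~(3) self-containedly, I would repeat it: assume $w$ is interior to $P_i$ with $w\in A_{i-1}$ (then $w\neq t_1,t_2$, since both $C_i$-edges at $w$ are new while $e$ is old), and subdivide one of those two new edges by a point $s$ so near $w$ that the shortest path from $s$ to $t_1$ runs through $w$. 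Since $C_i$ is the unique shortest rooted cycle through each of its new edges --- a strictly shorter rooted cycle through such an edge would be a shorter rooted cycle not contained in $A_{i-1}$, contradicting the greedy choice, and in an unambiguously weighted graph the shortest rooted cycle through a given edge is unique --- it is also the unique shortest rooted cycle through $s$. Running Suurballe's algorithm from $s$ therefore reproduces $C_i$. By \autoref{lem:ambit-contains-path} its first path leaves $P_i$ immediately at $w$ (the shortest $w$-to-$t_1$ path stays in $A_{i-1}$, which contains no edge of $P_i$) and remains in $A_{i-1}$ thereafter; by \autoref{cor:rung-removal} its second path, once in $A_{i-1}$, never returns. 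Tracing how these two paths must between them account for every edge of $P_i$ then shows that the arc of $P_i$ on the far side of $w$ from $s$ cannot appear in $C_i$ at all, a contradiction.

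Finally I would assemble the pieces. Properties~(1)--(3) hold for every prefix $P_0,\dots,P_i$, which makes that prefix an open ear decomposition of $A_i$, and they hold for the entire sequence. To conclude it is an ear decomposition of $G$ it remains to note that the $P_i$ cover all of $G$, which is immediate from the algorithm: it keeps selecting cycles while some edge is uncovered, so on termination every edge --- and hence, since the graph equals its own $2$-core and has no vertex of degree $\le 1$, every vertex --- lies in some $A_i$. (As a check, the number of ears is then one more than the dimension of the cycle space, recovering the fact that the $C_i$ form a cycle basis.)
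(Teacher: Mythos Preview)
Your proposal is correct and follows the same route as the paper: invoke \autoref{lem:cycles2ears} and read off the three ear-decomposition properties. The paper's own proof is a single sentence doing exactly this.

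Your separate treatment of property~(3) is more effort than needed. In the proof of \autoref{lem:cycles2ears}, the maximal sub-arcs of $C_i$ lying in $A_i\setminus A_{i-1}$ are explicitly taken to be ``separated by vertices \emph{or} edges of $A_{i-1}$''; once the lemma shows there is only one such arc, that arc automatically has no interior vertex in $A_{i-1}$ (such a vertex would split it in two). So property~(3) comes for free from the lemma's statement, read as a statement about subgraphs rather than edge sets, and there is no need to rerun the Suurballe argument. Your self-contained rerun is morally the same argument, though your appeal to \autoref{cor:rung-removal} for ``once in $A_{i-1}$, never returns'' is a slight overstatement of what that corollary says; the paper's version in the proof of \autoref{lem:cycles2ears} uses the additional fact that the chosen arc is the one closest to $t_1$ in cycle order to bridge the gap between entering $A_{i-1}$ and reaching the first rung.
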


\begin{proof}
By \autoref{lem:cycles2ears}, each of these graphs is a path; its endpoints belong to earlier paths and its edges and interior vertices do not. Thus, this sequence of paths satisfies all the requirements of an ear decomposition.
\end{proof}

\subsection{Greed is Good}

We have nearly completed the proof of correctness of our greedy algorithm for constructing minimum weight rooted cycle bases.

\begin{lemma}
\label{lem:optimal4ambit}
Let $C_i$ be the cycles of a greedy cycle sequence for an unambiguously-weighted rooted graph $G$, and let $A_i$ be the corresponding ambits.
Then the set of cycles $C_1,C_2,\dots C_i$ is a minimum weight rooted cycle basis for $A_i$.
\end{lemma}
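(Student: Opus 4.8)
The plan is to identify the greedy cycle sequence with the output of the matroid greedy algorithm applied to a suitable linear matroid, and then to invoke optimality of the matroid greedy algorithm. Let $\mathcal{R}$ be the family of characteristic vectors in $\mathbb{F}_2^E$ of the rooted cycles that lie inside $A_i$, let $M$ be the linear matroid on $\mathcal{R}$, and weight each element of $\mathcal{R}$ by the total weight of the edges of the corresponding cycle. I will first record two structural consequences of \autoref{cor:ear}: since $P_0,P_1,\dots,P_j$ is an ear decomposition of $A_j$ with $j+1$ ears, the cycle space $\mathcal{C}(A_j)$ of the subgraph $A_j$ has dimension $j$; and since each $C_j$ contains an edge not in $A_{j-1}$, the cycles $C_1,\dots,C_j$ are linearly independent, hence---being $j$ of them---a basis of $\mathcal{C}(A_j)$, so that $\operatorname{span}(C_1,\dots,C_j)=\mathcal{C}(A_j)$.

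From these facts I can read off that the bases of $M$ are exactly the rooted cycle bases of $A_i$: every element of $\mathcal{R}$ is an even subgraph of $A_i$ and hence lies in $\mathcal{C}(A_i)$; the set $\{C_1,\dots,C_i\}\subseteq\mathcal{R}$ is both independent and spans $\mathcal{C}(A_i)$, so $M$ has rank $i=\dim\mathcal{C}(A_i)$; therefore the size-$i$ independent subsets of $\mathcal{R}$---the bases of $M$---are precisely the $i$-element linearly independent sets of rooted cycles of $A_i$, which is the definition of a rooted cycle basis of $A_i$. The weight of an element of $\mathcal{R}$ as defined above agrees with the weight of the corresponding cycle, so minimum-weight bases of $M$ are minimum-weight rooted cycle bases of $A_i$.

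Next I will check that the greedy cycle sequence is exactly what the matroid greedy algorithm produces on $M$. The key observation is that, for a rooted cycle $C$ lying in $A_i$, $C$ is independent from $\{C_1,\dots,C_j\}$ in $M$ if and only if $C$ uses an edge outside $A_j$: if all edges of $C$ lie in $A_j$ then $C\in\mathcal{C}(A_j)=\operatorname{span}(C_1,\dots,C_j)$, while conversely $\operatorname{span}(C_1,\dots,C_j)$ is supported on the edge set of $A_j$. I will also note that the sequence is strictly increasing in weight: any rooted cycle with an edge outside $A_j$ also has an edge outside $A_{j-1}$, so $C_{j+1}$ is among the candidates competing to be $C_j$, whence $w(C_j)<w(C_{j+1})$ (the two cycles differ, as $C_j\subseteq A_j$). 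Running the matroid greedy algorithm in weight order and inducting on $j$, its current independent set right after $C_j$ is processed equals $\{C_1,\dots,C_j\}$: the element $C_j$ gets added because it has an edge outside $A_{j-1}$ and so is independent of $\{C_1,\dots,C_{j-1}\}$, and no element $C$ with $w(C_j)<w(C)<w(C_{j+1})$ gets added because such a $C$, were it independent of $\{C_1,\dots,C_j\}$, would be a rooted cycle of $A_i$ with an edge outside $A_j$ and hence would have weight at least $w(C_{j+1})$. After all of $\mathcal{R}$ has been processed the output is therefore $\{C_1,\dots,C_i\}$, since no remaining element of $\mathcal{R}$ has an edge outside $A_i$. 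As the matroid greedy algorithm returns a minimum-weight basis of $M$, this proves the lemma.

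The main obstacle is the equivalence in the third paragraph between linear independence from $C_1,\dots,C_j$ and using a fresh edge; it is what lets the ``includes a new edge'' rule defining a greedy cycle sequence simulate the ``result remains independent'' rule of the matroid greedy algorithm, and it depends essentially on the fact---supplied by \autoref{cor:ear}---that $C_1,\dots,C_j$ span the \emph{entire} cycle space of the ambit $A_j$ rather than merely some proper subspace.
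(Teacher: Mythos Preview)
Your proof is correct and follows essentially the same approach as the paper: both arguments identify the greedy cycle sequence with the matroid greedy algorithm on the linear matroid of rooted cycles in $A_i$, using \autoref{cor:ear} to establish the key equivalence that a rooted cycle is independent of $C_1,\dots,C_j$ if and only if it uses an edge outside $A_j$. Your exposition is somewhat more explicit than the paper's (which packages the same simulation argument into an outer induction on~$i$), but the mathematical content is the same.
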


\begin{proof}
We use induction on $i$. For $i=1$, the graph $C_1=A_1$ has only the one cycle. For $i>1$, this set of cycles is linearly independent because each contains at least one edge not found in earlier cycles. The number of cycles is the same as the number of ears (after the root edge) in an ear decomposition, by \autoref{cor:ear}, so it equals the dimension of the cycle space. As an independent set of the correct number of cycles, these cycles must form a cycle basis for $A_i$.

Because the cycle space of any graph forms a matroid, the minimum weight basis of any subset $\mathcal{S}$ of cycles can be found by a greedy algorithm that at each step selects the minimum-weight member of $\mathcal{S}$ that is independent of previous selections. By the induction hypothesis, any cycle that is independent of the previous $i-1$ selections must use at least one edge outside of $A_{i-1}$, and $C_i$ is the minimum-weight rooted cycle with this property. Therefore, the cycles form a minimum-weight rooted cycle basis.
\end{proof}

\begin{theorem}
The minimum weight rooted cycle basis of a biconnected rooted graph $G$ with positive edge weights can be constructed in polynomial time.
\end{theorem}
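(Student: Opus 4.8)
The plan is to read the theorem off \autoref{lem:optimal4ambit} together with a termination argument and a running-time analysis of the greedy algorithm described at the start of this section; the structural work has already been packed into the preceding lemmas. First I would reduce to the unambiguously-weighted case. If two paths or two cycles of $G$ have equal weight, I would replace the weighting by the consistent tie-breaking rule developed in Appendix~\ref{sec:unambiguous}, which separates all path and cycle weights while preserving the order of any two cycles whose weights already differ. This step is essential: \autoref{fig:bad-greedy} exhibits a graph on which, in the presence of ties, the unmodified greedy rule picks shortest rooted cycles that cover every edge but span only a proper subspace of the cycle space. After this reduction every rooted cycle has a distinct weight, so the greedy cycle sequence of this section is uniquely determined.

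Next I would verify correctness. Because $G$ is biconnected, \autoref{lem:2vc-paths} guarantees that every edge lies on a rooted cycle, so Suurballe's algorithm returns a well-defined shortest rooted cycle through each edge and step~2 succeeds. I would then check that the cycles $C_1,C_2,\dots$ output by the algorithm form a greedy cycle sequence: each $C_i$ is rooted, it contains the uncovered edge whose precomputed shortest-cycle length is smallest, and it is as short as possible subject to containing a new edge, since any rooted cycle through an uncovered edge $g$ is at least as long as the precomputed shortest rooted cycle through $g$, which itself contains the new edge $g$. The loop halts exactly when no uncovered edge remains, i.e.\ when the final ambit $A_k$ is all of $G$. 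By \autoref{cor:ear} the ambits then arise from an ear decomposition of $G$, and \autoref{lem:optimal4ambit} applied with $i=k$ says that $C_1,\dots,C_k$ is a minimum-weight rooted cycle basis of $A_k=G$. That basis is precisely the algorithm's output, and it is moreover weakly fundamental by property~(1) of a greedy cycle sequence.

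Finally I would bound the running time. The number of cycles produced is the cycle rank $k=m-n+1$, so the loop in step~3 runs $O(m)$ times, and---given the precomputed values---selecting the minimum uncovered edge and marking the edges of the chosen cycle as covered is clearly polynomial in total. The precomputation in step~2 is $O(m)$ shortest-path computations in graphs derived from $G$ (one Suurballe call per edge, taking the edge's midpoint as source, exactly as in the proof of \autoref{lem:cycles2ears}), which is polynomial; with standard heap-based shortest-path machinery and the fact that each selected cycle has length $O(n)$ this can be tightened to the $O(mn+n^2\log n)$ randomized (or polylogarithmically slower deterministic) bound advertised in the introduction, but for the statement at hand any polynomial bound suffices, and the total output length is $O(mn)$.

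The main obstacle is not this final assembly but the ingredients it rests on: the rung analysis (\autoref{lem:remove-rung} and \autoref{cor:rung-removal}) and \autoref{lem:cycles2ears}, which are what force the new edges of each greedy cycle to constitute a single ear, together with the design of the tie-breaking rule of Appendix~\ref{sec:unambiguous} so that all of those arguments remain valid once equal weights are separated.
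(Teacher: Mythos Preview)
Your proposal is correct and follows essentially the same route as the paper: reduce to unambiguous weights via Appendix~\ref{sec:unambiguous}, run the greedy algorithm, and invoke \autoref{lem:optimal4ambit} (with \autoref{cor:ear} ensuring the final ambit is all of $G$) for correctness, then bound the work by counting Suurballe calls. The paper's own proof adds a few implementation details you only allude to---the Johnson-style reweighting that lets Dijkstra handle the negative edges in Suurballe's second phase, and the batching of all second-phase computations sharing the same first path into a single-destination Dijkstra---to reach the $O(mn+n^2\log n)$ bound, but for the polynomial-time statement your argument suffices.
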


\begin{proof}
As outlined at the beginning of this section, we use Suurballe's algorithm to order the edges of $G$ by the lengths of their shortest cycles through the base edge. Then, using this order as a guide, we construct a greedy cycle sequence by repeatedly choosing an edge $f$ that is not part of the already-chosen cycles and using another instance of Suurballe's algorithm to find the shortest cycle through $f$ and the root edge, breaking ties in favor of cycles that use as few new edges as possible. By \autoref{lem:optimal4ambit}, the resulting set of cycles will form a minimum weight rooted cycle basis.

In a graph with $n$ vertices and $m$ edges, Suurballe's algorithm can be implemented in time $O(m+n\log n)$. The first stage of the algorithm may be implemented using Dijkstra's algorithm in this time bound. The second stage involves shortest paths in a graph $H$ with negative edge weights, to which Dijkstra's algorithm does not directly apply. However, in this second stage, we may re-weight each directed edge in $H$ from $u$ to $v$ with length $\ell$, giving it the new weight $\ell+d(s,v)-d(s,u)$, where $s$ is the starting vertex of the first path and $d$ is the shortest-path distance between two vertices in the input graph. This reweighting does not modify the comparison between any two path lengths, so the shortest paths in the reweighted version of~$H$ remain unchanged.  With these weights, the edges whose weights were negative become zero-weight, and all other edge weights remain non-negative, so Dijkstra's algorithm may again be applied.

A naive implementation of the algorithm applies Suurballe's algorithm $O(m)$ times so its total time is $O(m^2+mn\log n)$. However, this can be improved by observing that there are only $O(n)$ choices for the first path in Suurballe's algorithm, and that for each first path it is possible to handle all starting vertices of the second path, simultaneously, by using Dijkstra's algorithm to perform a single-destination shortest path computation. With this improvement the total runtime is $O(mn+n^2\log n)$. The algorithm as described so far applies only to unambiguously-weighted graphs but in Appendix~\ref{sec:unambiguous} we describe how to reduce the general problem to this case
in polynomial time.
\end{proof}

\paragraph{\bf Acknowledgements.}
The work of the first author was supported by the National Science Foundation under Grant CCF-1228639 and by the Office of Naval Research under Grant No. N00014-08-1-1015.

{\raggedright
\bibliographystyle{abuser}
\bibliography{bases}}

\newpage
\appendix

\section{Suurballe's algorithm}
\label{sec:suurballe}

As it is most commonly described, Suurballe's algorithm~\cite{Suu-Nw-74} (a special case of the shortest-augmenting-path method for minimum cost flows) finds two edge-disjoint shortest paths between the same two vertices in a weighted graph. However, it can readily be adapted to find shortest vertex-disjoint paths from a single source vertex to two different vertices, to find the shortest rooted cycle containing a given vertex, or to find the shortest rooted cycle containing a given edge. As our proof depends on the details of this algorithm, we review them briefly here.

\begin{figure}[h]
\centering\includegraphics[height=2.5in]{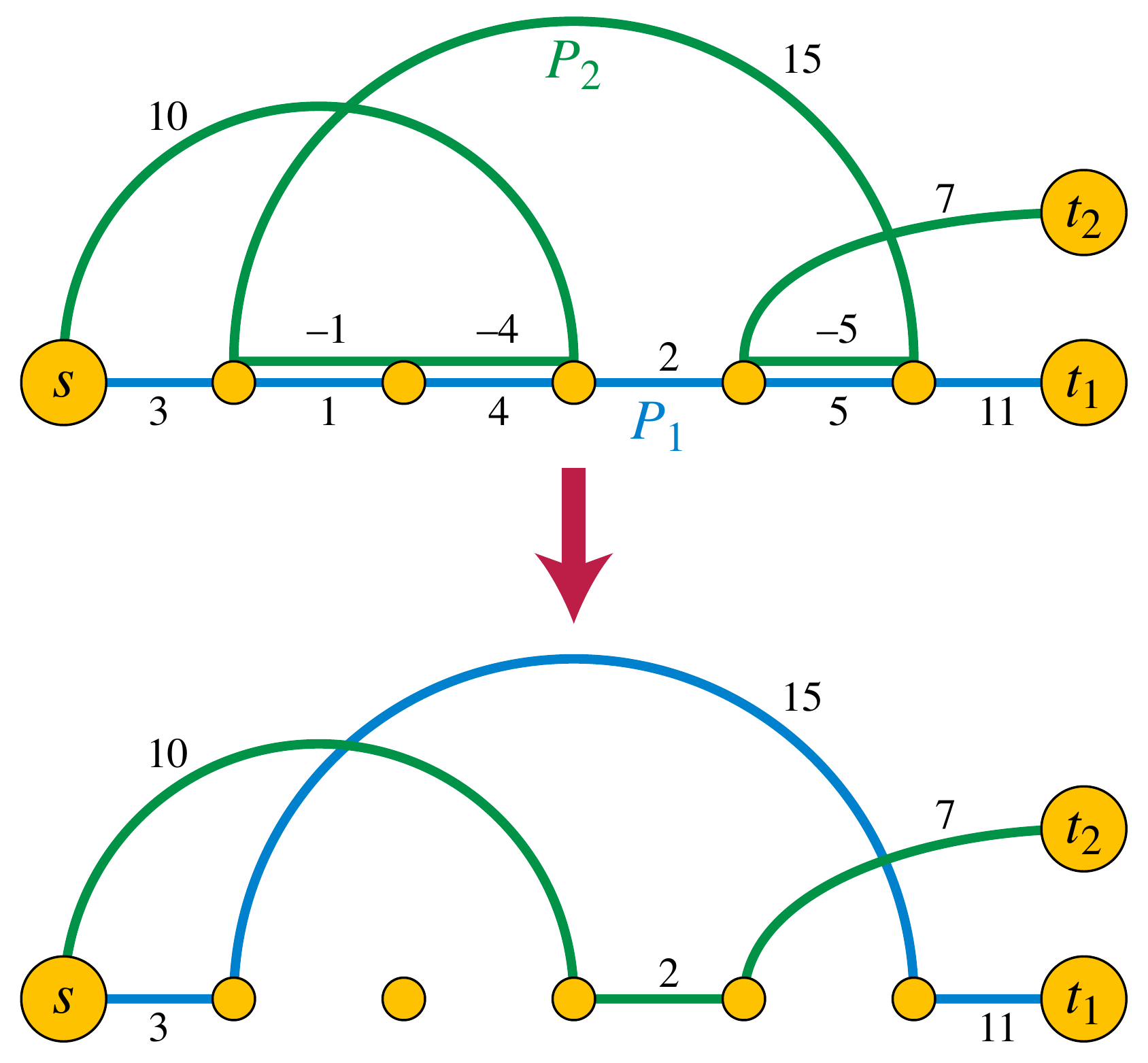}
\caption{Suurballe's algorithm: a shortest path $P_1$ from $s$ to $t_1$, together with a shortest path $P_2$ from $s$ to $t_2$ in a modified graph in which backwards travel along $P_1$ has negative cost, combine to form two disjoint paths of minimum total length from $s$ to $t_1$ and $t_2$.}
\label{fig:suurballe}
\end{figure}

We suppose that we are given a weighted directed graph $G$ with no negatively weighted cycles, a designated source vertex $s$, and two designated destination vertices $t_1$ and $t_2$. Suurballe's algorithm begins by finding a shortest path $P_1$ from $s$ to $t_1$ (in the graph $G\setminus\{t_2\}$, so that path $P_1$ avoids vertex $t_2$). It then constructs a new graph $H$, as follows:
\begin{itemize}
\item For each vertex $v$ of $G$, there are two vertices $v_{\text{in}}$ and $v_{\text{out}}$ in $H$.
\item If vertex $v$ does not belong to path $P_1$, then $H$ includes a weight-zero directed edge from  $v_{\text{in}}$ to $v_{\text{out}}$. If $v$ does belong to path $P_1$, then instead $H$ includes a weight-zero edge in the other direction, from $v_{\text{out}}$ to $v_{\text{in}}$.
\item For each directed edge $uv$ of $G$ that does not participate in path $P_1$, there is an edge with the same weight in $H$ from $u_{\text{out}}$ to  $v_{\text{in}}$.
\item For each directed edge $uv$ of $G$ that does  participate in path $P_1$, there is an edge with the negation of its weight in $H$ from $u_{\text{in}}$ to  $v_{\text{out}}$.
\end{itemize}

There can be no negative cycles in $H$, for any such cycle would either translate to a negative cycle in $G$ or could be used to improve path~$P_1$. Next, Suurballe's algorithm computes a shortest path in $H$ from $s$ to $t_2$. In terms of the original graph $G$, this corresponds to a path $P_2$  from $s$ to $t_2$ that travels forwards along edges disjoint from $P_1$, backwards along edges of $P_1$, and cannot touch a vertex of $P_1$ without traveling along an edge incident to that vertex. Additionally (with the additional assumption that $G$ has no zero-length cycles) the maximal subsets of contiguous edges of $P_1$ used by $P_2$ appear in the same order in both paths, for a violation of this ordering would  provide a cycle along which one of the two paths could be improved.

The symmetric difference of the two paths $P_1$ and $P_2$ (the set of edges used in one but not both of them) forms the union of the two desired vertex-disjoint paths from $s$ to $t_1$ and $t_2$, as shown in \autoref{fig:suurballe}.

To find the shortest path from $t_1$ to $t_2$ through an edge $e$ rather than a source vertex $s$, we may apply the same algorithm, choosing a shortest path from one endpoint of $e$ to $t_1$ in $G$ and then either from the other endpoint to $t_2$ (if the first path does not pass through $e$) or from the same endpoint (if it does pass through~$e$).

\newpage

\section{Making the weights unambiguous}
\label{sec:unambiguous}

The basic version of our minimum-weight basis algorithm, described in the main text of the paper, requires a graph that is \emph{unambiguously weighted}: no two paths or cycles have the same total weight. We describe here two techniques to simulate an unambiguous weighting, for a graph that is not already unambiguously weighted: one randomized without any slowdown and one that is deterministic but slower.

We assume a model of computation in which, however the given edge weights are represented,
comparisons are exact: any comparison of two sums of weights (of paths or cycles in the graph) is computed correctly and deterministically. We assume that sums of polynomial numbers of input values can be computed and compared exactly, and that our model of computation is capable of representing integers with $O(\log n)$ bits of precision (else how could it address the memory cells required to store the input) but we do not allow unlimited-precision arithmetic beyond these assumptions.

Our task is to augment the weight comparisons of the cycle basis algorithm with extra information so that, whenever two different sets of edges have the same weight, we can nevertheless determine an unambiguous perturbed comparison result. This perturbation should not change any weight comparisons that were already unequal in the given input, it should be performed efficiently, and it should not generate any inconsistencies (such as cyclic order relations) that could cause our algorithms to produce invalid results.

\subsection{Randomized}

Note that, to perform our algorithm correctly, we do not actually need a truly unambiguous weighting. It is sufficient to achieve a weighting in which all comparisons of weight sums performed by the algorithm are unambiguous. For, if we find a perturbed set of weights $w$ for which all algorithmic comparisons are unequal, then there also exists a perturbation of the perturbation, $w'$, such that $w'$ is truly unambiguous and such that the algorithm makes the same choices on $w'$ as it makes on $w$. But, on $w'$, the algorithm produces a correct output, so it must also be correct on~$w$.

Randomly perturbing the weights of a graph in such a way that our algorithm sees only unambiguous weight comparisons, with high probability, is relatively straightforward. For each edge $e$, we choose a random $c\log n$-bit integer $r_e$ for an appropriate constant~$c$. Then, whenever we compute a sum of edge weights, we also compute along with it the sum of these random integers. If the algorithm compares two sums  of weights and finds them to be equal, it then compares the corresponding sums of the $r_i$, and uses the result of that comparison. If the sums of the $r_i$ are also equal, then we choose arbitrarily which of the two values to consider as smaller.

Any individual comparison of the weights of two distinct edge sets has probability at most $2^{-c\log n}=n^{-c}$ of finding equal sums of $r_i$. Since the algorithm makes at most $O(n^3)$ comparisons in total (in the case of dense graphs), any choice of $c>3$ will cause the algorithm to succeed with high probability, by the union bound.
After the algorithm has run to completion, we verify that the result has the property that the newly-covered edge sets of each cycle form an ear decomposition with the correct number of ears. If so, it must be a valid minimum-weight rooted cycle cover, and if not, a failure must have occurred so we choose a new random perturbation and try again.

We summarize the results of this section:

\begin{theorem}
We can find a minimum-weight cycle basis for graphs with $n$ vertices and $m$ edges
in randomized expected time $O(mn+n^2\log n)$.
\end{theorem}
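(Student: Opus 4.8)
Recall that the statement here abbreviates ``minimum-weight rooted cycle basis'' (the object studied throughout the paper and in this appendix) as ``cycle basis.'' The plan is to equip the deterministic minimum-weight rooted cycle basis algorithm of the main text (which requires an unambiguously weighted input and runs in time $O(mn+n^2\log n)$) with the random tie-breaking described just above: draw an independent uniform $c\log n$-bit integer $r_f$ for each edge $f$, with $c$ a constant to be fixed; carry the running sum $\sum r_f$ alongside every weight-sum the algorithm forms; and whenever the algorithm finds two weight-sums equal, break the tie by the corresponding $r$-sums, breaking any further tie by a fixed total order on edge sets. Run the algorithm under this augmented comparator, then \emph{certify} its output by checking that the incremental edge sets $A_i\setminus A_{i-1}$ of the chosen cycles form an ear decomposition with the correct number of ears (one per cycle; cf.\ \autoref{cor:ear}). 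If the certificate passes, return the basis; otherwise discard the $r_f$ and restart. This yields a Las Vegas algorithm, so it suffices to show (a) a passing certificate guarantees optimality, (b) each trial passes with probability $\Omega(1)$, and (c) a trial runs in time $O(mn+n^2\log n)$.

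For (a): by construction each chosen cycle $C_i$ is a shortest rooted cycle through some edge not covered by $A_{i-1}$, and these cycles are chosen in nondecreasing order of their shortest-cycle lengths, so $C_i$ is in fact a minimum-weight rooted cycle having an edge outside $A_{i-1}$; combining this with the certified fact that the $A_i\setminus A_{i-1}$ form an ear decomposition of the right size, the argument of \autoref{lem:optimal4ambit} goes through verbatim and the output is a minimum-weight rooted cycle basis. Hence the algorithm never returns a wrong answer, and the certificate catches precisely the only way a trial can fail, namely \autoref{lem:cycles2ears}'s conclusion breaking down under ties.

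For (b): let $\mathcal E$ be the event that, over the whole trial, every comparison of the total weights of two distinct edge subsets $S\neq T$ of $G$ — these occur only inside the Dijkstra/Suurballe shortest-path computations and in the greedy loop's edge ordering and tie-breaks — comes out unequal, i.e.\ $\sum_{f\in S}w_f\neq\sum_{f\in T}w_f$ or $\sum_{f\in S}r_f\neq\sum_{f\in T}r_f$. Conditioned on $\mathcal E$, choose $\epsilon>0$ smaller than the least positive gap between any two weight-sums the trial inspects; then the real weighting $w_f+\epsilon r_f$ reproduces the augmented comparator on every comparison the trial performs, so the algorithm behaves identically on $w+\epsilon r$, and a generic infinitesimal further perturbation $w'$ of $w+\epsilon r$ is unambiguous in the sense of the main text and — the trial's comparisons being strict — still induces the same behavior; by the main-text theorem the algorithm is correct on $w'$, so the trial outputs a minimum-weight rooted cycle basis and its certificate passes. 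Thus $\Pr[\text{success}]\ge\Pr[\mathcal E]$. To lower-bound $\Pr[\mathcal E]$, fix one comparison of distinct $S\neq T$ with equal $w$-weight, pick $e_0\in S\triangle T$, and condition on all $r_f$ with $f\neq e_0$: the equality $\sum_{f\in S}r_f=\sum_{f\in T}r_f$ then forces $r_{e_0}$ to a single value, so it has probability at most $2^{-c\log n}=n^{-c}$. Since the algorithm makes $O(n^3)$ such comparisons in the dense case, the union bound gives $\Pr[\neg\mathcal E]=O(n^{3-c})\le\tfrac12$ for $c$ a large enough constant, so the expected number of restarts is $O(1)$.

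For (c): one trial is one run of the main-text algorithm, costing $O(mn+n^2\log n)$, plus maintaining the $r$-sums — a constant number of extra additions of $O(\log n)$-bit integers per weight addition, all partial sums staying within $O(\log n)$ bits and hence within the stated model of computation — plus the $O(m+n)$-time certificate check, all absorbed into $O(mn+n^2\log n)$ per trial; multiplying by the $O(1)$ expected number of trials gives expected total time $O(mn+n^2\log n)$. The main obstacle is step (b)'s reduction: the augmented comparator is not globally a weighting, and across all of $\mathbb F_2^E$ it need not even be transitive, so one cannot hand it directly to the main-text correctness proof — what must be pinned down is that only the finitely many comparisons actually executed matter, that each of them is a strict comparison of the weights of two explicit edge subsets of $G$, and hence that they are all simultaneously realized by the perturbation $w+\epsilon r$ for small enough $\epsilon$. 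I would therefore take care to enumerate exactly which weight comparisons Dijkstra's algorithm, Suurballe's algorithm, and the greedy outer loop perform and confirm each has this form; the final certificate is a safeguard making the algorithm cleanly Las Vegas regardless.
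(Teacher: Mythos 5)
Your proposal is correct and follows essentially the same route as the paper: augment the comparator with sums of random $O(\log n)$-bit integers per edge, bound the probability of any executed comparison remaining tied by $n^{-c}$ and a union bound over the $O(n^3)$ comparisons, justify correctness by realizing the comparator as a true perturbed weighting (a ``perturbation of the perturbation'') so the unambiguous-weight theorem applies, and make the whole thing Las Vegas via the ear-decomposition certificate and restarts. Your write-up is if anything slightly more explicit than the paper's about why only the finitely many executed comparisons matter and about the deferred-decisions step in the probability bound.
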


\subsection{Deterministic}

An unambiguous perturbation can always be achieved by choosing an arbitrary distinct positive integer index~$i$ for each edge, choosing a sufficiently small value of $\delta$ (smaller than the minimum difference between the weights of two edge sets, and adding $\delta/2^i$ to the weight of an edge with index~$i$. In this way, all perturbations are small enough and no two perturbations are equal. It is not necessary with this approach to determine an explicitly numeric value for $\delta$; instead, one can compare the unperturbed weights of edge sets and, if the result is equal, compare the perturbation values, as we did in the randomized algorithm. However, computing the perturbation values numerically would seem to require $n$-bit numerical precision, not allowed in our computational model. Instead, we describe data structural techniques that can determine the same comparison values, deterministically, in polylogarithmic time per comparison.

We define the \emph{incremental first difference problem} as follows. We have an ordered collection of elements (in the rooted cycle basis problem, these are the edges of the input graph) and we wish to form sets of elements, starting from the empty set, either by adding one element at a time or by specifying the elements that belong to a set (in the rooted cycle basis problem, these sets are paths and cycles in the graph). Additionally, we wish to be able to compare two sets to determine whether they are equal or unequal, and if they are unequal, to find the element at which they differ that is earliest in the element ordering.

To solve this problem, we fix a balanced binary tree $T$ whose leaf nodes are the elements and whose leaf ordering is the same as the ordering of the elements. For node $j$ of the tree, we let $L_j$ denote the left child of~$j$ (another node or a leaf element), we let $R_j$ denote the right child of~$j$, and we let $D_j$ denote the set of leaf elements descending from $j$.

\begin{figure}[b]
\centering\includegraphics[scale=0.4]{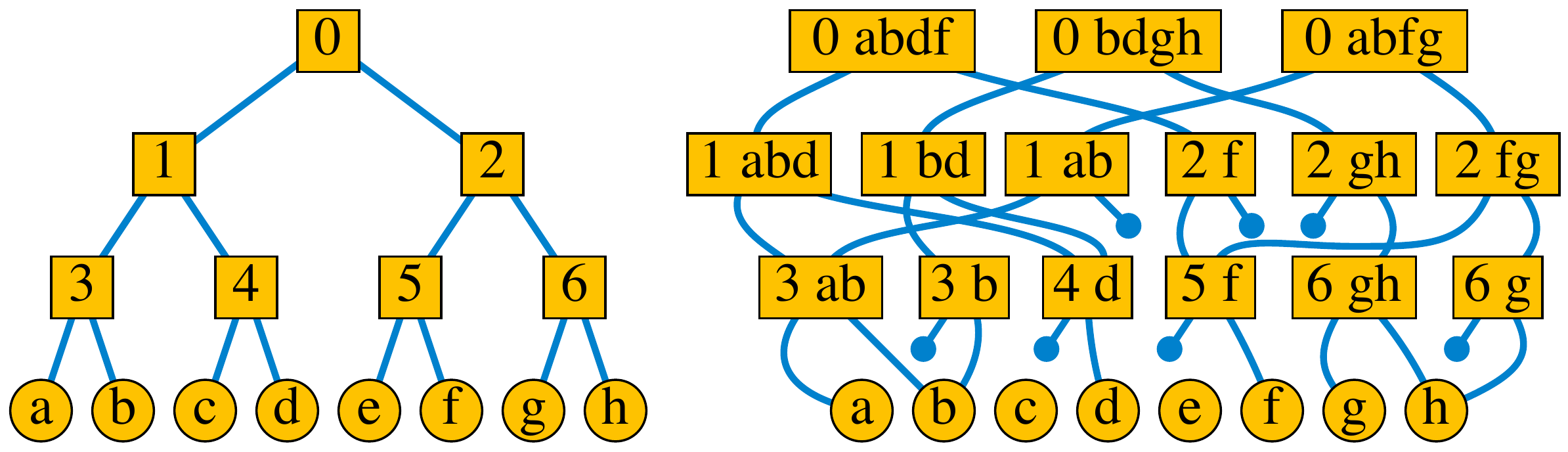}
\caption{The binary tree $T$ of ordered elements and the subset objects of an incremental first difference data structure for the three sets $\{a,b,d,f\}$, $\{b,d,g,h\}$, and $\{a,b,f,g\}$. The blue lines show the connections between each tree node or subset object and its two children. The set labels on the subset objects are not part of the data structure. The subset dictionary is not shown.}
\label{fig:set-comparison}
\end{figure}

Our representation will consist of a collection of \emph{subset objects},
each of which is associated with one particular node $j$ of $T$ and represents a nonempty subset of $D_j$. If a subset object $x$ is associated with tree node $j$ and represents the set $X$, then $x$ stores pointers for two child subset objects: the left child $\ell_x$, associated with tree node $L_j$ having set $X\cap D_{L_j}$, and the right child $r_x$ associated with tree node $R_j$ having set $X\cap D_{R_j}$. If either of these two intersections is empty then the corresponding child pointers $\ell_x$ or $r_x$ may be set to null pointers.
In this way, one of the given sets in the incremental first difference problem may be represented directly as a subset object associated with the root of~$T$, and by the binary tree of subset objects reachable from that root object.

In order to compare sets efficiently, we require that no two subset objects associated with the same node $j$ can have the same associated set. If two of the given sets have the same intersection with $D_j$, then they both share the same subset object for that tree node. Thus, we need to be able to find and re-use existing subset objects when necessary. To do so, we use a deterministic dictionary data structure that we call the \emph{subset dictionary}. Its keys are the triples $(j,\ell_x,r_x)$ associated with each subset object, and the value associated with such a key is the subset object having that triple of values. The subset dictionary could be implemented as a binary search tree, with logarithmic cost per operation; however, for integer keys with polynomial range, more efficient (in theory) deterministic dictionary structures are available with time $O((\log\log n)^2/\log\log\log n)$ per operation~\cite{AndTho-JACM-07}. To transform a triple $(j,\ell_x,r_x)$ into an integer with polynomial range, we assign each subset object a unique index number, we let $q$ be a polynomial bound on the number of subset objects to be created, and we calculate an integer key as $jq^2+aq+b$ where $a$ and $b$ are the index numbers of $\ell_x$ and $r_x$.

An example of the shared subset objects for three sets is illustrated in \autoref{fig:set-comparison}.

\begin{lemma}
\label{lem:set-comparison}
Given two sets with this representation, it is possible to determine the smallest element at which the two sets differ in time $O(\log n)$.
\end{lemma}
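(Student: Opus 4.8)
The plan is to compare the two input sets by walking down the binary tree $T$ from the root, exploiting the fact that equal subsets at a tree node are represented by a \emph{shared} subset object. First I would observe that two sets have the same intersection with $D_j$ if and only if they are represented by the identical subset object pointer at node $j$ (this is exactly the invariant enforced by the subset dictionary). Hence a single pointer-equality test at the root node decides whether the two sets are equal at all; if the pointers are equal we report ``equal'' in $O(1)$ time.

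If the root subset objects differ, the smallest element at which the sets differ lies in $D_{L_{\text{root}}}$ if the left-child pointers $\ell_x$ differ, and otherwise lies in $D_{R_{\text{root}}}$. So the algorithm is: maintain a ``current node'' $j$ of $T$ together with the two (distinct) subset-object pointers representing the two sets' intersections with $D_j$ — treating a null pointer as the canonical representative of the empty set so that the pointer-equality test still works. At each step, compare the two left-child pointers; if they differ, descend into $L_j$ with those pointers; otherwise the left intersections coincide, so the discrepancy is on the right, and we descend into $R_j$ with the right-child pointers (which must then differ). Repeat until $j$ is a leaf; that leaf is the earliest element at which the two sets differ, because at every step we followed the leftmost subtree in which a difference could occur, respecting the leaf ordering of $T$.

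For correctness I would argue inductively on the descent: the loop invariant is that the two current pointers correctly represent the restrictions $X \cap D_j$ and $Y \cap D_j$ and that these restrictions are unequal, hence the earliest differing element of $X$ and $Y$ is in $D_j$; each descent step preserves this invariant and strictly decreases the depth remaining. When $j$ is a leaf, $D_j$ is a single element, which must therefore be the one sought. The running time is $O(\log n)$ because $T$ is balanced of height $O(\log n)$ and each descent step performs a constant number of pointer comparisons and child-pointer dereferences.

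The main obstacle — really the only subtlety — is the handling of empty intersections and null pointers: I must ensure a uniform convention so that ``the two current intersections are unequal'' is testable by a single pointer comparison even when one side is empty. The cleanest fix is to treat the null pointer itself as the unique representative of the empty set (which is consistent with the data structure's policy of not creating subset objects for empty intersections), so that at a node $j$ the pair of pointers is equal precisely when $X \cap D_j = Y \cap D_j$; then the invariant and the descent rule go through verbatim. One should also note that this $O(\log n)$ bound is the version used in the main text; if one substitutes the faster deterministic dictionary of~\cite{AndTho-JACM-07} the per-comparison cost is unchanged here, since comparison only walks the tree and performs no dictionary operations — dictionary lookups are needed only when \emph{building} sets, not when comparing them.
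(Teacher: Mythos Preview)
Your proposal is correct and follows essentially the same approach as the paper: start at the root, use pointer equality to detect equal intersections, and descend left whenever the left-child pointers differ and right otherwise. The only cosmetic difference is the termination convention: the paper stops as soon as one of the two pointers becomes null and then follows the leftmost non-null path in the surviving subset object, whereas you fold that into the main loop by treating null as the canonical empty-set pointer (with the implicit convention that the children of null are null); these are equivalent and both run in $O(\log n)$.
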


\begin{proof}
We begin with the two subset objects $x$ and $y$ associated with the root node of~$T$ that represent the two sets. If these are the same object, then the two sets are the same; otherwise, they must either have differing left or right child pointers. If the left children differ, we replace $x$ by $\ell_x$ and $y$ by $\ell_y$, and otherwise we replace $x$ by $r_x$ and $y$ by $r_y$, in either case moving down one level in~$T$ to another pair of differing sets containing the same smallest difference. Eventually we will reach a state where either $x$ or $y$ is a null pointer, after which we can follow the leftmost non-null path downward from the remaining non-null subset object to find the smallest differing element. This process takes constant time per level of~$T$ so its total time is $O(\log n)$.
\end{proof}

\begin{lemma}
\label{lem:incremental-set-comparison}
Given a set $S$ with this representation, and an element $e$, we can add $S\cup\{e\}$ to the representation in time $O(\log n(\log\log n)^2/\log\log\log n)$.
\end{lemma}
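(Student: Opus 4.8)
The plan is to use the fact that inserting a single element $e$ into a set $S$ changes the tree of subset objects only along the root-to-leaf path of $T$ that ends at the leaf for $e$; everything off that path is shared, so the representation becomes persistent ``for free'' (the old root object for $S$ is never modified). First I would walk down from the root of $T$ to the leaf for $e$, recording the nodes $j_0=\mathrm{root},j_1,\dots,j_d$ on this path, where $d=O(\log n)$ because $T$ is balanced. At each $j_k$ the child on the side not containing $e$ already represents the correct intersection, since $(S\cup\{e\})\cap D_\bullet=S\cap D_\bullet$ there, so that subset object (or null pointer) can be reused verbatim.

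Then I would rebuild the path from the bottom up. At level $d$ the subset object for $\{e\}$ is looked up in the subset dictionary and created if absent. Going upward from $j_d$ toward the root, the new subset object at $j_k$ has one child equal to the just-computed object for $j_{k+1}$ and its other child equal to the corresponding child of the subset object that represented $S\cap D_{j_k}$. I would form the key triple $(j_k,\ell,r)$ from these two children and query the subset dictionary: if an object with that triple already exists it is reused (which is, for example, exactly what happens if $e\in S$ already, so that the whole path collapses back to the original objects and nothing new is created), and otherwise I allocate a fresh subset object, give it the next unused index, encode its triple as the integer $j_kq^2+aq+b$ as in the data-structure description, and insert it. The object produced at the root represents $S\cup\{e\}$, and since every object along the way was obtained through the dictionary, the canonicity invariant is preserved.

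For the time bound: the path has length $O(\log n)$, and at each of its nodes we perform $O(1)$ pointer manipulations together with arithmetic on integers of magnitude polynomial in $n$ (hence $O(\log n)$ bits, within the assumed computational model), plus a constant number of subset-dictionary operations. Each dictionary operation on the Andersson--Thorup structure costs $O((\log\log n)^2/\log\log\log n)$, so the total is $O(\log n\,(\log\log n)^2/\log\log\log n)$, as claimed.

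The main obstacle is not the running time, which is just the product of the path length and the per-operation dictionary cost, but checking the invariants carefully: that updating only the root-to-$e$ path is sufficient (this rests on $D_{L_j}$ and $D_{R_j}$ partitioning $D_j$, with $e$ lying on exactly one side at each level), that reusing the off-path children yields the correct intersections, and that every newly created object is registered in the dictionary so that the no-duplicates property needed by \autoref{lem:set-comparison} continues to hold. The degenerate cases — adding the first element to the empty set (null children everywhere off the path) and re-adding an element already present (no new objects created) — should be checked to confirm they are handled uniformly by the same bottom-up procedure without any special casing.
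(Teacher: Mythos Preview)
Your proposal is correct and follows essentially the same approach as the paper: both update only the root-to-leaf path for $e$, reuse the off-path child unchanged, and use one dictionary lookup per level to either find an existing subset object or create a new one, giving $O(\log n)$ dictionary operations in total. The paper phrases the construction recursively (top-down recursion that unwinds bottom-up) while you phrase it as an explicit downward walk followed by a bottom-up rebuild, but the algorithm and the analysis are the same.
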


\begin{proof}
More generally, we show that for any subset object $x$, associated with a tree node $j$ and with associated set $X$, we can find or construct a subset object for $X\cup \{e\}$. To do so,
we determine whether $e$ is a left or right descendant of~$j$, and let $y$ be the left or right child of $x$, associated with a set $Y$. We recursively find or construct a subset object $y'$ for the set $Y\cup\{e\}$, then look up the triple of $j$, $y'$, and the other child of $x$ in the subset dictionary. If this lookup succeeds, we have found the subset object we desire. If it fails, we create a new subset object. In this way, whenever we create a new object we can be sure that its associated tree node and set do not coincide with any subset object that has already been created.

This construction process takes a constant number of dictionary operations per level of~$T$, or a logarithmic number of dictionary operations overall overall.
\end{proof}

Using the same bottom-up construction process, looking up each new triple in the subset dictionary prior to constructing a new subset object for that triple, we can show:

\begin{lemma}
\label{lem:all-at-once-set-comparison}
Given a set $S$ as a sorted list of its elements, and an incremental first difference data structure on a universe of $n$ elements, we can include $S$ in the sets represented by the data structure in time $O(n(\log\log n)^2/\log\log\log n)$.
\end{lemma}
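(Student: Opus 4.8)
The plan is to perform the bottom-up construction sketched in the paragraph preceding the lemma, but carried out in a single traversal of $T$ that visits each relevant node exactly once, so that the shared prefixes of the many root-to-leaf paths touched by the elements of $S$ are not reprocessed. Let $T_S$ be the set of nodes $j$ of $T$ with $D_j\cap S\ne\emptyset$; since the leaves descending from a node are contained in the leaves descending from its parent, this set is closed under taking ancestors, so it is a subtree of $T$ containing the root, and $|T_S|\le 2n-1=O(n)$. The subset objects we must create or locate are in one-to-one correspondence with the nodes of $T_S$: for a leaf $e\in S$ this is the unique subset object for the singleton $\{e\}$, and for an internal node $j$ it is the object whose two children are the objects produced for $S\cap D_{L_j}$ and $S\cap D_{R_j}$, where one of those children is the null pointer exactly when $S$ misses the corresponding side.

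First I would enumerate $T_S$ by a recursive descent from the root that carries, along with the current node $j$ and the interval $[\mathrm{lo},\mathrm{hi}]$ of leaf positions spanned by $D_j$, a cursor into the sorted list $S$ pointing at its first element whose position is $\ge\mathrm{lo}$. If the element under the cursor has position $>\mathrm{hi}$, then $S\cap D_j=\emptyset$ and we return at once; otherwise $j\in T_S$, and (when $j$ is internal) we recurse into the left child with leaf interval $[\mathrm{lo},\mathrm{mid}]$ and the cursor unchanged, then into the right child with interval $[\mathrm{mid}+1,\mathrm{hi}]$ and the cursor as advanced by the left recursion. Because $S$ is sorted and the leaf order of $T$ agrees with the element order, the cursor is always correctly positioned, each call does $O(1)$ navigation work, and the total number of calls is $O(|T_S|)=O(n)$ (each call is on a node of $T_S$ or on a child of such a node).

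The subset-object construction is interleaved with this traversal in post-order: on returning from a node $j\in T_S$, the recursive calls on its children have already reported the subset objects (or null pointers) for $S\cap D_{L_j}$ and $S\cap D_{R_j}$, so we look up the corresponding triple in the subset dictionary, reuse the stored object if it is present, and otherwise allocate a new subset object with those children and insert it under the integer key $jq^2+aq+b$ described above. Exactly as in \autoref{lem:incremental-set-comparison}, consulting the dictionary before every allocation preserves the invariant that no two subset objects at the same tree node carry the same set, which is what \autoref{lem:set-comparison} relies on; the object returned at the root represents $S$ itself (and if $S=\emptyset$ the root object is simply the null pointer, a trivial special case). Each node of $T_S$ incurs $O(1)$ navigation work together with $O(1)$ subset-dictionary operations, each costing $O((\log\log n)^2/\log\log\log n)$ with the deterministic dictionary of~\cite{AndTho-JACM-07}, giving the claimed total $O\!\bigl(n(\log\log n)^2/\log\log\log n\bigr)$.

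I do not expect a real obstacle. The two points that need care are: arranging the single traversal so that every node of $T_S$ is processed once, since naively repeating the one-element insertion of \autoref{lem:incremental-set-comparison} for each of the $|S|$ elements would reprocess shared path prefixes and incur an extra logarithmic factor; and the usual bookkeeping to keep the representation canonical, namely handling null children on the side that $S$ misses and looking up every triple before creating a new object.
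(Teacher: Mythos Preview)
Your proposal is correct and is precisely the approach the paper sketches in the one sentence preceding the lemma (``Using the same bottom-up construction process, looking up each new triple in the subset dictionary prior to constructing a new subset object for that triple''). You have simply filled in the details the paper omits---the cursor-based traversal of $T_S$ and the $|T_S|\le 2n-1$ bound---so there is nothing to add.
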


Using this data structure to disambiguate equal weight sums, we have:

\begin{theorem}
We can find a minimum-weight cycle basis for graphs with $n$ vertices and $m$ edges
deterministically in time $O((mn\log n(\log\log n)^2/\log\log\log n+n^2\log^2 n)$.
\end{theorem}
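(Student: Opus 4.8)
The plan is to instrument our minimum-weight rooted cycle basis algorithm so that every comparison of edge-weight sums that it performs is carried out, symbolically, on the perturbed weights $\tilde w_e = w_e + \delta/2^{i_e}$, using the incremental first difference data structure of this section in place of the $n$-bit arithmetic such a perturbation would naively require. Recall that that algorithm consists of $O(n)$ invocations of Suurballe's algorithm --- one to order the edges by the length of their shortest rooted cycle, then one for each cycle added to the basis --- and that each invocation is built from two runs of Dijkstra's algorithm, the first on $G$ and the second on the derived graph $H$ after the reweighting $\ell \mapsto \ell + d(s,v) - d(s,u)$ that makes $H$'s weights non-negative. Every quantity the algorithm ever compares is thus the (perturbed) weight of a concrete edge set of $G$: a shortest path, one of the pairs of vertex-disjoint paths produced by a Suurballe call, or a rooted cycle. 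For two edge sets $S \ne S'$ of equal unperturbed weight, $\tilde w(S) < \tilde w(S')$ holds exactly when the smallest index in $S \triangle S'$ belongs to $S'$, which is precisely the outcome reported by \autoref{lem:set-comparison}. So it suffices to keep, alongside every quantity the algorithm stores, a subset object in an incremental first difference structure whose universe is $E$ ordered by the indices $i_e$, and to resolve any tie in an ordinary numeric comparison by one call to \autoref{lem:set-comparison}.

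Concretely, I would carry the subset objects along inside each Dijkstra computation: every distance label is paired with a subset object for the edge set of $G$ that it represents --- in a stage-one computation a path, and in a stage-two computation on $H$ the edge set obtained by combining the label's $H$-path with the stage-one path. A successful relaxation changes this set by a single element --- an insertion, of cost $O(\log n (\log\log n)^2/\log\log\log n)$ by \autoref{lem:incremental-set-comparison}, or, when the relaxed $H$-edge runs backwards along the stage-one path, a deletion, which the mirror image of that lemma's bottom-up construction performs within the same bound. When a Suurballe call returns its two disjoint paths, the rooted cycle they determine is assembled as a subset object directly from a sorted list of its edges using \autoref{lem:all-at-once-set-comparison}. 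The edge-ordering step and the greedy selection loop then compare these cycle objects, so the perturbation automatically supplies the consistent tie-breaking rule promised in the main text.

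For the running time I would add up the substituted operations. There are $O(n)$ Suurballe invocations and hence $O(n)$ Dijkstra computations; implemented with a Fibonacci heap, one such computation does $O(m)$ edge relaxations (one weight comparison and at most one subset-object update each), $O(m)$ decrease-key operations ($O(1)$ comparisons amortized), and $n$ \textsc{extract-min} operations ($O(\log n)$ comparisons amortized), for $O(m + n\log n)$ weight comparisons and $O(m)$ subset updates overall. At $O(\log n)$ per comparison and $O(\log n(\log\log n)^2/\log\log\log n)$ per update, a single Dijkstra computation costs $O(m\log n(\log\log n)^2/\log\log\log n + n\log^2 n)$, so all of them together cost $O(mn\log n(\log\log n)^2/\log\log\log n + n^2\log^2 n)$; the at most $n$ cycle assemblies add a further $O(n^2(\log\log n)^2/\log\log\log n)$ and the $O(m\log m)$ comparisons of the edge-ordering step add $O(m\log^2 n)$, both dominated. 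This is the claimed bound.

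The step I expect to be the main obstacle is the verification underlying the first two paragraphs: confirming that, after Suurballe's stage-two reweighting, the value that actually governs each comparison is the perturbed weight of an honest \emph{unsigned} subset of $E$, and that the edge set attached to a stage-two distance label is correctly maintained (through both insertions and deletions) under the forward and backward traversals of $H$ and under the optimization that handles all second-path sources with a single Dijkstra run. The accompanying amortized count of \textsc{extract-min} comparisons in the Fibonacci heap also needs a little care. Everything else is a routine substitution of data-structure operations for constant-time comparisons in an algorithm already proved correct.
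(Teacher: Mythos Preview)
Your proposal is correct and takes essentially the same approach as the paper: attach a subset object to each distance label, update it via \autoref{lem:incremental-set-comparison} on each relaxation, assemble finished cycles via \autoref{lem:all-at-once-set-comparison}, and break weight ties via \autoref{lem:set-comparison}, charging the $O(\log n)$ comparison slowdown to the priority-queue term $n^2\log n$ and the larger per-update cost to the $mn$ relaxation term. You are in fact more careful than the paper about the stage-two wrinkle you flag (backward $H$-edges acting as deletions from the tracked edge set, and whether the compared quantity is genuinely the perturbed weight of an unsigned edge set); the paper's proof simply asserts that the shortest-path computations ``build up their paths one edge at a time'' and leaves that point implicit.
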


\begin{proof}
We build an incremental first difference data structure on the edge sets of the paths and cycles created by the algorithm for unambiguously-weighted graphs. This data structure slows down the instances of Dijkstra's and Suurballe's algorithms within the overall algorithm by the time bound given in \autoref{lem:incremental-set-comparison}, as these algorithms build up their paths by one edge at a time allowing \autoref{lem:incremental-set-comparison} to be used. Once Suurballe's algorithm has found two paths whose symmetric difference forms a minimum-weight cycle, we can construct a set for that cycle using \autoref{lem:all-at-once-set-comparison}.

Within these algorithms, whenever a weight comparison determines that two edge sets have the same weight sum, we use \autoref{lem:set-comparison} to find the edge making the dominant contribution to the perturbations for those edge sets, and use this information to break the tie appropriately. This modification slows each comparison down by a logarithmic factor. In particular,  maintaining the priority queue in Dijkstra's algorithm only involves weight comparisons, and not construction of any new edge sets, so the $O(n^2\log n)$ term of our unambiguously-weighted algorithm is only slowed down by this factor rather than by the slightly larger factor of the $O(mn)$ term.
\end{proof}

\newpage

\section{Fundamental rooted cycle bases}
\label{sec:duality}

A \emph{fundamental} rooted cycle basis is a rooted cycle basis that is generated by a spanning tree, in the sense that all cycles in the basis are formed by a path in the spanning tree together with a single non-tree edge. As we show, it is NP-hard but fixed-parameter tractable to determine whether such a basis exists.

\subsection{Hardness}

We define a \emph{plane graph} to be a connected graph embedded in the plane with no edge crossings; that is, it is a planar graph together with a choice of how to embed the graph. The \emph{dual graph} of a plane graph $G$ has a vertex for each bounded or unbounded face of the embedding of $G$; each edge in $G$ separates two faces (or possibly a single face from itself) and corresponds to a dual edge connecting the corresponding two vertices. The dual graph is also plane but not necessarily simple: it may have multiple adjacencies (more than one edge between the same two vertices) and self-loops (an edge with both endpoints equal to each other). It has a planar embedding for which each dual vertex belongs to its corresponding face, and each edge of $G$ is crossed once by its corresponding dual edge and does not intersect any other dual edge. The dual graph of the dual graph of $G$ is isomorphic to~$G$ itself.

We define a \emph{rooted Hamiltonian cycle} of a graph $G$ rooted at an edge~$e$ to be a Hamiltonian cycle of $G$ that includes edge~$e$.

\begin{lemma}
\label{lem:duality}
Let $G$ be a plane graph rooted at $e$, and let $T$ be a spanning tree of $G$ containing $e$. Then the fundamental cycle basis for $T$ is rooted at $e$ if and only if the set of edges dual to the edges in $(G\setminus T)\cup\{e\}$ forms a Hamiltonian cycle for the dual graph of $G$.
\end{lemma}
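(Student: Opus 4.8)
The plan is to exploit the standard correspondence between spanning trees of a plane graph $G$ and spanning trees of its dual: if $T$ is a spanning tree of $G$, then the set of dual edges $T^* := \{f^* : f \in G\setminus T\}$ is a spanning tree of the dual graph $G^*$. This is the classical fact that the complement of a spanning tree, viewed in the dual, is again a spanning tree (it is acyclic in $G^*$ because a dual cycle would enclose a vertex of $G$ not reachable in $T$, and it is spanning/connected by a counting argument, since $|G\setminus T| = m - (n-1) = m^* - (n^* - 1)$ where $m = m^*$ and $n + n^* = m + 2$ by Euler's formula). So first I would set up this bijection and fix notation: write $e^*$ for the dual edge of $e$, and let $S = (G\setminus T)\cup\{e\}$, so that the dual edge set in the statement is $S^* = T^* \cup \{e^*\}$.

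Next I would translate the two sides of the biconditional into statements about $T^*$. The left side says: every fundamental cycle of $T$ passes through $e$. The fundamental cycle of a non-tree edge $f$ is $f$ together with the $T$-path between its endpoints; under duality, $f \in G\setminus T$ corresponds to a leaf-to-leaf... more precisely, removing $f^*$ from the spanning tree $T^*$ splits $T^*$ into two components, and the set of primal edges crossing this cut is exactly the fundamental cycle of $f$ in $T$ (the fundamental-cycle/fundamental-cut duality). Hence: the fundamental cycle of $f$ contains $e$ $\iff$ $e$ is in the cut of $G$ determined by deleting $f^*$ from $T^*$ $\iff$ $e^*$, together with $f^*$, lies on a cycle in... equivalently, $e^*$ lies on the fundamental cycle of $f^*$ in the tree... hmm, I need to be careful here: the cleanest phrasing is that the fundamental cycle of $f$ (for $T$) contains $e$ iff $f^*$ lies on the unique path in $T^*$ between the two endpoints of $e^*$. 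So the left side becomes: \emph{every edge of $T^*$ lies on the $T^*$-path between the endpoints of $e^*$}, i.e. $T^*$ \emph{is} that path — $T^*$ is a Hamiltonian path of $G^*$ whose endpoints are the endpoints of $e^*$. Adding back $e^*$ closes this path into a Hamiltonian cycle of $G^*$ through $e^*$, which is exactly the right side.

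For the converse direction I would run the same equivalences backwards: if $S^* = T^* \cup \{e^*\}$ is a Hamiltonian cycle of $G^*$, then since $T^*$ is a tree it must be $S^*$ minus one edge, and that edge has to be $e^*$ (every other edge of $S^*$ is in the tree $T^*$), so $T^*$ is a Hamiltonian path with endpoints the ends of $e^*$; then each fundamental cut of $T^*$ separates the path into two subpaths and therefore the cut must contain $e^*$ (the only $S^*$-edge joining the two pieces), which is precisely the statement that every fundamental cycle of $T$ in $G$ contains $e$. The main obstacle, and the step I would be most careful about, is the degenerate behavior of the dual: $G^*$ may have parallel edges and self-loops (noted already in the excerpt), and $e$ might bound a single face on both sides (so $e^*$ is a self-loop, in which case no spanning tree $T$ can contain... actually $e \in T$ forces $e$ to be a bridge-free situation — if $e^*$ is a self-loop then $e$ lies on no cycle and the fundamental basis is vacuously rooted only if there are no other non-tree edges). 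I would handle this by noting $e \in T$ and discussing the boundary cases explicitly, and by being precise that "Hamiltonian cycle" for a graph with loops/multiedges means a spanning subgraph that is a single cycle visiting every vertex exactly once, using genuine (non-loop) edges; with that convention the fundamental-cycle/fundamental-cut duality goes through verbatim and the proof is just the chain of equivalences above.
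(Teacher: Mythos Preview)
Your argument is correct and structurally parallel to the paper's: both set up the tree--cotree pair $(T,T^*)$ and reduce the statement to ``the dual tree $T^*$ coincides with the $T^*$-path joining the two endpoints of $e^*$.'' The difference is in how that reduction is justified. You invoke the standard fundamental-cycle/fundamental-cut duality (the fundamental cycle of $f\notin T$ equals the fundamental cut of $f^*\in T^*$), which gives the clean chain of equivalences $e\in C_f \iff e^*$ separates the two components of $T^*-f^* \iff f^*$ lies on the $T^*$-path between the ends of $e^*$. The paper instead argues geometrically: it takes that path $P$ in $T^*$, closes it with $e^*$ to a Jordan curve, and reads off directly which primal edges cross it, handling the two cases $P=T^*$ and $P\neq T^*$ separately. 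Your route is more combinatorial and arguably cleaner once one grants the cycle--cut duality as known; the paper's is more self-contained, relying only on the Jordan curve theorem rather than on an auxiliary duality lemma. Your caution about loops and multi-edges in $G^*$ is appropriate but ultimately harmless here: since $e\in T$, the edge $e$ is not a bridge only if it lies on some cycle, and if $e$ were a bridge then $e^*$ would be a loop and the ``Hamiltonian cycle'' side fails while the rooted-basis side holds only in the degenerate situation $G=T$; you can dispose of this in one line rather than worrying over it.
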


\begin{figure}[b]
\centering\includegraphics[width=0.9\textwidth]{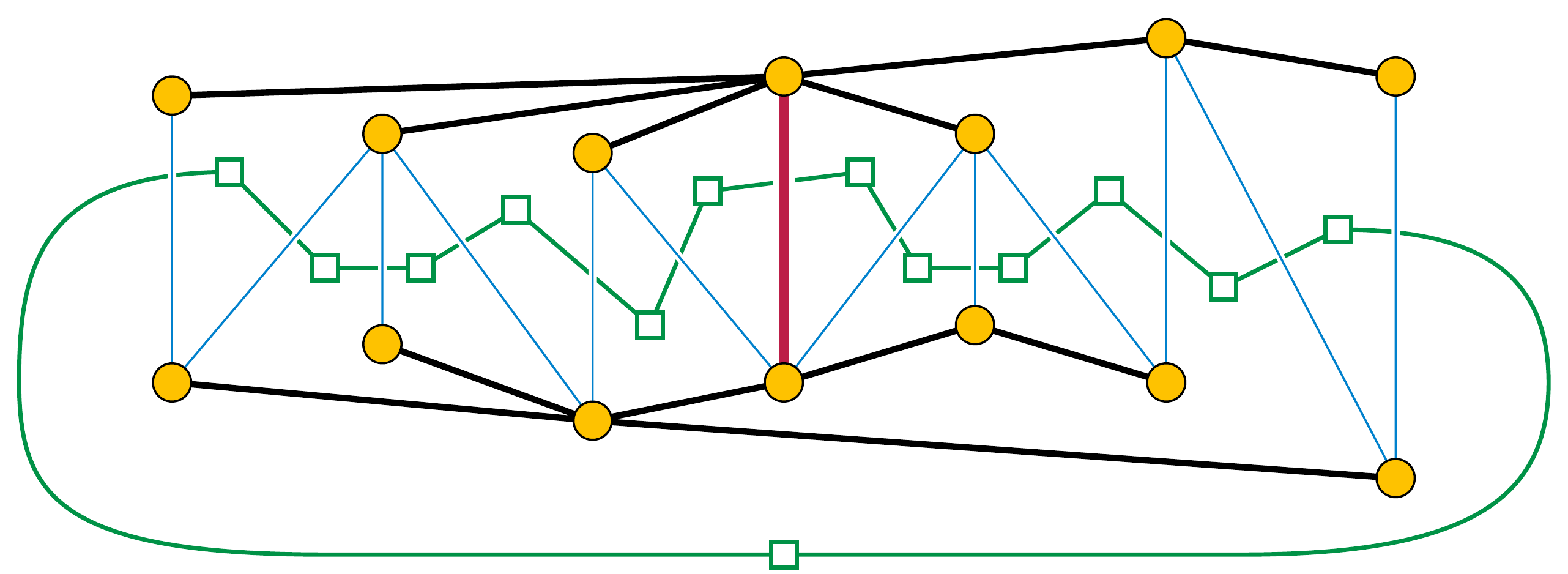}
\caption{A spanning tree generating a rooted fundamental cycle basis of a planar graph, and its dual Hamiltonian cycle}
\label{fig:dual-hamiltonian}
\end{figure}

\begin{proof}
Let $T'$ be the set of edges dual to the edges of $G\setminus T$; then $T'$ is necessarily a spanning tree of the dual graph of $G$, and $(T,T')$ form a \emph{tree-cotree decomposition} of~$G$~\cite{EppItaTam-Algs-92,Epp-SODA-03}. Let $P$ be the path in $T'$ connecting the two faces that are separated by~$e$, and let $e'$ be the edge dual to $e$.

If $P=T'$, then $T'\cup\{e'\}$ is a Hamiltonian cycle. By the Jordan curve theorem, this cycle separates the plane into two parts, an inside and an outside. Let $A$ and $B$ be the sets of vertices of $G$ inside and outside the cycle, respectively. Then $e$ and every edge of $G\setminus T$ crosses the cycle exactly once, so it has one endpoint in $A$ and the other endpoint in $B$. On the other hand, the edges of $T\setminus\{e\}$ do not cross the cycle at all, so both of their endpoints lie in the same set. Every fundamental cycle induced in $T$ by an edge $f\in G\setminus T$ crosses from $A$ to $B$ at $f$, and must return to $A$ on another edge, but the only other edge that passes from one set to another is~$e$. Therefore, every fundamental cycle contains~$e$, and the fundamental cycle basis is rooted.

If $P\ne T'$, then there exists a dual edge $f'\in T'$ that is adjacent to a vertex $v$ of~$P$ but does not itself belong to~$P$.  $T'\cup\{e'\}$ is not a Hamiltonian cycle, because it has degree three or more at~$v$. Again, by the Jordan theorem, the cycle $P\cup\{e'\}$ separates the vertices into two subsets $A$ and $B$, with only $e$ and the edges dual to $P$ having endpoints in both subsets.
Let $f$ be the edge whose dual is $f'$. Then the fundamental cycle induced in~$T$ by~$f$ cannot include $e$, because if it did it would have only one edge with endpoints in both subsets, an impossibility. Therefore, the fundamental cycle basis of~$T$ includes a cycle that does not contain~$e$ and the basis is not rooted.
\end{proof}

\begin{corollary}
\label{cor:dual-ham}
Let $G$ be a plane graph rooted at~$e$. Then $G$ has a rooted fundamental cycle basis if and only if its dual graph, rooted at the dual edge~$e'$ of~$e$, has a rooted Hamiltonian cycle.
\end{corollary}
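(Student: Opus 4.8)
The plan is to read the corollary off from \autoref{lem:duality} by combining it with the tree--cotree correspondence between spanning trees of a plane graph and spanning trees of its dual, together with a small reduction that lets the generating spanning tree be assumed to contain~$e$.

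For the ``if'' direction, suppose the dual graph of~$G$ has a Hamiltonian cycle $C^{*}$ through the dual edge~$e'$. Deleting~$e'$ leaves a Hamiltonian path~$T'$, which is in particular a spanning tree of the dual. Let $T$ be the set of edges of~$G$ whose duals do \emph{not} lie in~$T'$; by the tree--cotree decomposition (applied to the dual graph, whose dual is~$G$) this $T$ is a spanning tree of~$G$, and since $e'\notin T'$ we have $e\in T$. With this choice the set of edges dual to $G\setminus T$ is exactly~$T'$, and the two faces of~$G$ that~$e$ separates are the two faces joined by~$e'$; because $C^{*}=T'\cup\{e'\}$ is a cycle, these are the two endpoints of the Hamiltonian path~$T'$, so the path in~$T'$ connecting them is all of~$T'$. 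We are thus in the ``$P=T'$'' case of \autoref{lem:duality}, which says the fundamental cycle basis generated by~$T$ is rooted at~$e$; hence $G$ has a rooted fundamental cycle basis.

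For the ``only if'' direction, suppose $G$ has a rooted fundamental cycle basis, generated by a spanning tree~$T$. First I would show that $T$ may be taken to contain~$e$. If $e\notin T$, then~$e$ is a non-tree edge, and for every \emph{other} non-tree edge~$f$ the fundamental cycle of~$f$ consists of~$f$ together with the path in~$T$ joining its endpoints; that path lies in~$T$ and so omits~$e$, so rootedness is impossible unless there is no other non-tree edge --- i.e.\ the cycle space is one-dimensional, the $2$-core of~$G$ is a single cycle through~$e$, and we may re-select the unique non-tree edge to be any edge of that cycle other than~$e$. Once $e\in T$, \autoref{lem:duality} states that the edges dual to $(G\setminus T)\cup\{e\}$ form a Hamiltonian cycle of the dual graph; this set contains~$e$, so that Hamiltonian cycle contains~$e'$ and is therefore rooted.

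I expect the genuine work here to be bookkeeping rather than mathematics: checking via the tree--cotree decomposition (already invoked inside \autoref{lem:duality}) that the complementary edge set is a spanning tree, observing that $e'$ joins the two endpoints of the Hamiltonian path~$T'$ so that the connecting path exhausts~$T'$, and disposing of the degenerate cases where the cycle space is trivial or~$G$ has bridges (so the dual has self-loops and one must be careful about what ``Hamiltonian cycle'' means). None of these steps needs anything beyond the facts established earlier; all the content of the corollary already resides in \autoref{lem:duality}.
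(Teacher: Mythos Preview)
Your argument is correct and is essentially the derivation the paper has in mind; the paper states the corollary without proof, relying on \autoref{lem:duality} and the tree--cotree correspondence exactly as you do. One small simplification: in the ``if'' direction you need not revisit the ``$P=T'$'' case analysis inside the proof of \autoref{lem:duality}; once you have built $T$ from the Hamiltonian path $T'$ via tree--cotree duality and observed $e\in T$, the edge set dual to $(G\setminus T)\cup\{e\}$ is literally $T'\cup\{e'\}=C^{*}$, so the lemma \emph{statement} immediately yields rootedness. Your treatment of the degenerate ``only if'' case $e\notin T$ (forcing a one-dimensional cycle space and swapping $e$ into the tree by a single matroid exchange) is the right way to bridge the gap between the lemma's hypothesis $e\in T$ and the corollary's more general hypothesis.
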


\begin{figure}[ht]
\centering\includegraphics[scale=0.5]{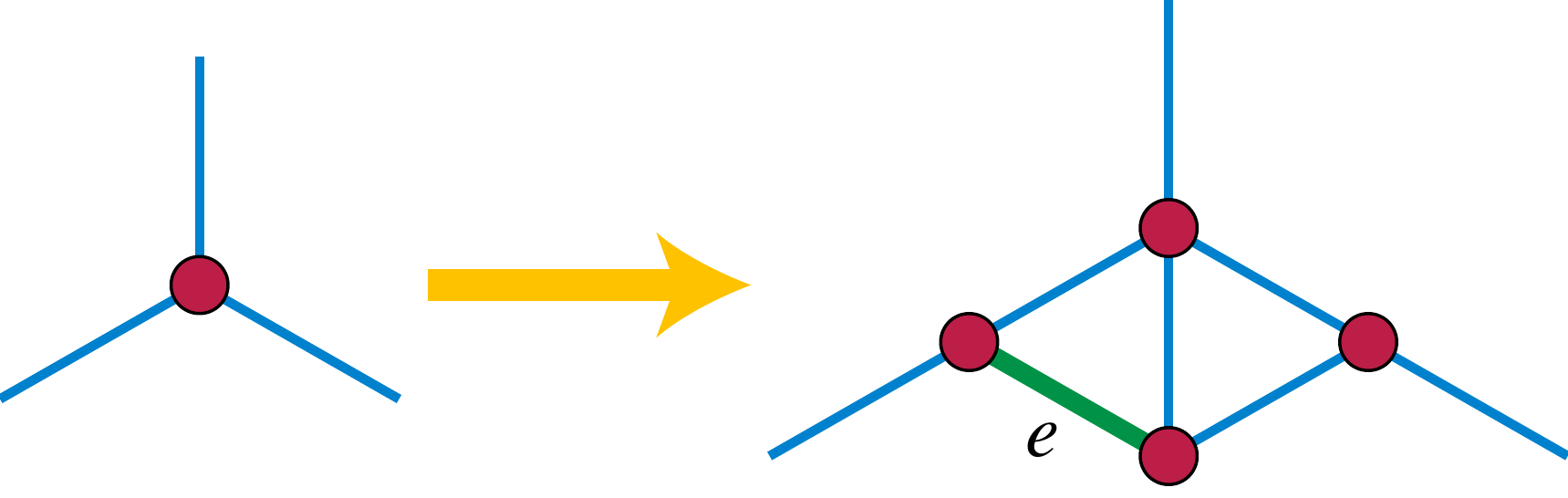}
\caption{If a degree-three vertex in a graph $G$ is replaced by the subgraph on the right, then $G$ has a Hamiltonian cycle if and only if the modified graph has a Hamiltonian cycle containing~$e$.}
\label{fig:forced-edge-gadget}
\end{figure}

\begin{lemma}
\label{lem:forced-edge-gadget}
Let $G$ be a graph containing a degree-three vertex, and replace this vertex by the four-vertex subgraph shown in \autoref{fig:forced-edge-gadget}. Then $G$ has a Hamiltonian cycle if and only if the modified graph has a Hamiltonian cycle that contains $e$.
\end{lemma}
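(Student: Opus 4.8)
The plan is to handle this as a standard local-replacement (gadget) argument: reduce the whole equivalence to two finite facts about the four-vertex gadget of \autoref{fig:forced-edge-gadget}, together with one parity observation about cuts. Write $v$ for the degree-three vertex of $G$ that is replaced, with neighbors $a$, $b$, $c$, and write $G'$ for the modified graph; the three former edges $va$, $vb$, $vc$ now attach to three designated \emph{port} vertices of the gadget, one edge each, and the gadget otherwise shares no vertex or edge with the rest of $G'$. Two properties of the gadget, each verifiable by inspection of the finite picture, will drive the proof: \textbf{(P1)} for each of the three unordered pairs of ports, the gadget admits a Hamiltonian path on its four vertices joining that pair of ports; and \textbf{(P2)} every Hamiltonian path of the gadget that joins two ports uses the edge $e$.

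For the forward direction, suppose $G$ has a Hamiltonian cycle $H$. Since $v$ has degree three, $H$ uses exactly two of the edges $va,vb,vc$; say it meets $v$ along the edges to ports $p$ and $q$, where $\{p,q\}$ is one of the three pairs. Delete $v$ from $H$ and splice in, between the edge to $p$ and the edge to $q$, the gadget Hamiltonian path joining $p$ and $q$ guaranteed by (P1). The result visits every vertex of $G'$ exactly once, so it is a Hamiltonian cycle of $G'$, and by (P2) it contains $e$.

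For the reverse direction, suppose $G'$ has a Hamiltonian cycle $H'$ (the hypothesis that it contains $e$ will not actually be needed here). Consider the edge cut $\partial$ consisting of the three port edges, separating the four gadget vertices from the rest of $G'$. A cycle crosses any edge cut an even number of times, so $H'$ uses either $0$ or $2$ edges of $\partial$. It cannot use $0$: then the four gadget vertices would be covered by a subcycle of $H'$ disjoint from everything else, forcing $G'$ to have only four vertices, whereas $G'$ has more than four. Hence $H'$ uses exactly two port edges, and consequently the portion of $H'$ lying inside the gadget is a single Hamiltonian path of the gadget joining those two ports. Contracting that path back to a single vertex $v$ and restoring $v$'s incidences to the two corresponding neighbors among $a,b,c$ converts $H'$ into a Hamiltonian cycle of $G$.

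The main obstacle is not conceptual: beyond the parity remark, the real content is verifying (P1) and (P2) for the specific four-vertex gadget of \autoref{fig:forced-edge-gadget} — enumerating the Hamiltonian paths of a four-vertex graph between each pair of its three ports, and checking that every such path passes through $e$ while every pair of ports is realized by at least one path. This is a routine finite case check, and is the place where one must confirm the gadget is drawn so that these two properties hold. I would also remark, since it clarifies the role of $e$, that the edge $e$ enters the argument only in the forward direction (via (P2)); the reverse direction in fact produces a Hamiltonian cycle of $G$ from \emph{any} Hamiltonian cycle of $G'$.
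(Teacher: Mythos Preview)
Your proof is correct and follows essentially the same approach as the paper: a forward direction by case analysis on which two of the three edges at $v$ the Hamiltonian cycle uses, and a reverse direction by contracting the gadget back to a vertex. Your version is more explicit in the reverse direction, supplying the cut-parity argument that justifies why $H'$ meets the gadget in a single Hamiltonian path between two ports (the paper simply asserts that contracting the gadget yields a Hamiltonian cycle of $G$), and you correctly observe that the hypothesis ``contains $e$'' is not actually needed for that direction.
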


\begin{proof}
If $G$ has a Hamiltonian cycle, it must use two of the three edges at the replaced vertex~$v$. A simple case analysis shows that in each of the three ways of choosing two edges, there exists a corresponding path through the four-vertex subgraph that uses the corresponding two edges, covers all vertices, and uses $e$. Therefore, if $G$ has a Hamiltonian cycle, the modified graph also has a Hamiltonian cycle that uses~$e$. In the other direction, if the modified graph has a Hamiltonian cycle that uses~$e$, then contracting the four-vertex subgraph to a single vertex produces a Hamiltonian cycle in~$G$.
\end{proof}

\begin{lemma}
\label{lem:planar-forced-edge-hard}
The problem of determining whether a given 3-connected rooted planar graph contains a rooted Hamiltonian cycle is \NP-complete.
\end{lemma}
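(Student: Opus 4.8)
The plan is to reduce from the classical result of Garey, Johnson, and Tarjan that deciding Hamiltonicity is \NP-complete even for $3$-connected $3$-regular planar graphs. Membership of our problem in \NP\ is immediate: a rooted Hamiltonian cycle, presented as its edge set, is a polynomial-size certificate, and one checks in linear time that it is a spanning cycle and that it contains the root edge. For the hardness reduction, given a $3$-connected cubic planar graph $G$, I would pick an arbitrary vertex $v$ (every vertex has degree three, since $G$ is cubic), apply the substitution of \autoref{lem:forced-edge-gadget} at $v$ --- replacing $v$ by the four-vertex gadget and attaching the three former edges of $v$ to the three terminals of the gadget --- and declare the gadget edge $e$ to be the root of the resulting graph $G'$. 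This transformation is computable in constant time, and \autoref{lem:forced-edge-gadget} says precisely that $G$ has a Hamiltonian cycle if and only if $G'$ has a Hamiltonian cycle through $e$. Thus the reduction is correct provided $G'$ is again a $3$-connected planar graph.

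Planarity of $G'$ is easy: in a fixed planar embedding of $G$, the three edges at $v$ leave a small disk around $v$ in some cyclic order, and (as one reads off \autoref{fig:forced-edge-gadget}) the gadget can be drawn inside that disk with its three terminals meeting the three edge-stubs in the same cyclic order; substituting this drawing for the contents of the disk yields a planar embedding of $G'$. What really needs checking is that $G'$ stays $3$-connected, and I would do this by ruling out a hypothetical $2$-vertex cut $\{x,y\}$ of $G'$, splitting into cases according to how many of $x,y$ lie in the gadget $H$. If neither does, then $G'-\{x,y\}$ is obtained from the connected graph $G-\{x,y\}$ (connected since $G$ is $3$-connected) by expanding the surviving vertex $v$ into the connected graph $H$, hence is connected. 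If exactly one of $x,y$ lies in $H$, then $H$ with that vertex deleted is still connected and still incident to at least two of $v$'s former neighbours, at most one of which is the other deleted vertex, so the gadget remains attached to the connected remainder of $G-v$. If both $x,y$ lie in $H$, then the two surviving gadget vertices each remain joined, within $G'-\{x,y\}$, to the connected graph $G-v$. In every case $G'-\{x,y\}$ is connected, so $G'$ has no $2$-cut and is $3$-connected.

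The main obstacle is exactly this last bookkeeping: the three cases above rely on concrete structural facts about the specific four-vertex gadget of \autoref{fig:forced-edge-gadget} --- that it is internally $2$-connected, that its three terminals are distinct, and that after deleting any one or two of its vertices every surviving vertex is still connected (within the gadget, or directly through a terminal edge) to $v$'s former neighbourhood. These are a finite list of elementary checks on an explicit four-vertex graph, but they are genuinely needed: a carelessly chosen forced-edge gadget (for instance a $4$-cycle) would break $3$-connectivity. Once these checks are discharged, $G'$ is a $3$-connected rooted planar graph, the reduction is complete, and the lemma follows.
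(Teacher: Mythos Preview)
Your proposal is correct and follows essentially the same route as the paper: reduce from Hamiltonicity in $3$-connected cubic planar graphs via the gadget of \autoref{lem:forced-edge-gadget}, and observe that the substitution preserves planarity and $3$-connectivity. The paper simply asserts the preservation of planarity and $3$-connectivity without argument, whereas you spell out the case analysis; your version is thus a strict elaboration of the paper's proof rather than a different approach.
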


\begin{proof}
We prove the result by means of a reduction from the known \NP-complete problem of finding a Hamiltonian cycle in a 3-regular 3-vertex-connected planar graph~\cite{GarJoh-79}. Given such a graph $G$, modify it according to \autoref{lem:forced-edge-gadget}, giving a graph $H$ and an edge $e$ such that we wish to find a Hamiltonian cycle in $H$ through~$e$. 
$G$ has a Hamiltonian cycle if and only if $H$ has a Hamiltonian cycle through~$e$, and the transformation preserves planarity and 3-connectivity.
\end{proof}

\begin{theorem}
It is \NP-complete to determine whether a given rooted graph has a rooted fundamental cycle basis.
\end{theorem}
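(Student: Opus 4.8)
The plan is to combine the planar duality of \autoref{cor:dual-ham} with the hardness of rooted Hamiltonicity established in \autoref{lem:planar-forced-edge-hard}. Membership in \NP{} is immediate: a certificate is a spanning tree $T$ of the input graph that contains the root edge $e$; given $T$, one computes for each of the $m-n+1$ non-tree edges the fundamental cycle it induces and checks in linear time that every such cycle passes through $e$.

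For hardness I would reduce from the problem of deciding whether a $3$-connected rooted planar graph has a rooted Hamiltonian cycle, shown \NP-complete in \autoref{lem:planar-forced-edge-hard}. Given such an instance---a $3$-connected planar graph $H$ with root edge $e'$---first fix a planar embedding of $H$; by Whitney's theorem a $3$-connected planar graph has an essentially unique embedding, so this step is well defined and takes polynomial time. Let $G$ be the dual graph of this embedded $H$, and let $e$ be the edge of $G$ dual to $e'$. Then $G$ is planar and $3$-connected (the dual of a simple $3$-connected planar graph on at least four vertices is again simple and $3$-connected), and the dual graph of $G$ is isomorphic to $H$ with $e'$ the dual edge of $e$. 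Hence \autoref{cor:dual-ham}, applied to the plane graph $G$, says that $G$ rooted at $e$ has a rooted fundamental cycle basis if and only if $H$ rooted at $e'$ has a rooted Hamiltonian cycle. Since the map $H\mapsto(G,e)$ is computable in polynomial time, this is a valid many-one reduction, so the rooted fundamental cycle basis problem is \NP-hard; combined with membership in \NP{} it is \NP-complete, and the construction shows it remains so even for $3$-connected planar graphs.

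The only point requiring care is the direction of the duality together with the handling of the embedding: \autoref{cor:dual-ham} is phrased in terms of the \emph{dual} graph, so the reduction must pass from Hamiltonicity in $H$ to a fundamental cycle basis in the dual $G$, and one must confirm that $G$ is a legitimate input with the root edge correctly identified as the dual of $e'$. Because $3$-connected planar graphs have an essentially unique embedding, there is no ambiguity in which dual to take. If one prefers to sidestep Whitney's theorem entirely, note that the reduction in \autoref{lem:planar-forced-edge-hard} already produces $H$ together with an explicit plane embedding (inherited from the embedding of the $3$-regular, $3$-connected planar graph it starts from and the local replacement of \autoref{lem:forced-edge-gadget}), which can simply be carried along into the dualization step. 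Beyond this bookkeeping I do not expect any substantive obstacle.
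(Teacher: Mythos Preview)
Your proposal is correct and follows essentially the same route as the paper: reduce from rooted Hamiltonicity in a $3$-connected planar graph (via \autoref{lem:planar-forced-edge-hard}), pass to the planar dual, and invoke \autoref{cor:dual-ham}; membership in \NP{} is witnessed by a spanning tree. The only differences are cosmetic---you swap the roles of the names $G$ and $H$, and you add the (harmless) extra observation that the dual is again $3$-connected together with a discussion of embedding uniqueness, whereas the paper is content to note that the dual is simple.
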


\begin{proof}
We prove the result by means of a reduction from finding a Hamiltonian cycle through a given edge~$e$ of a 3-connected planar graph~$G$ (\autoref{lem:planar-forced-edge-hard}).
Embed $G$ planarly, let $H$ be its dual graph, and let $e'$ be the edge dual to~$e$. Because $G$ is 3-vertex-connected, $H$ is a simple graph.
Then by \autoref{cor:dual-ham}, $H$ has a fundamental cycle basis rooted at~$e'$ if and only if $G$ has a Hamiltonian cycle through~$e$. Since finding a rooted Hamiltonian cycle is \NP-hard, so is testing the existence of a rooted fundamental cycle basis.
On the other hand, it is straightforward to verify that a given spanning tree $T$ generates a rooted fundamental cycle basis, so testing the existence of a rooted fundamental cycle basis belongs to \NP. Since it is in \NP and is \NP-hard, it is \NP-complete.
\end{proof}

\subsection{Fixed-parameter tractability}

We now show that it is fixed-parameter tractable (parameterized by clique-width) to test for the existence of a fundamental cycle basis. This means that there is an algorithm for solving the problem whose running time is a polynomial of the input size (independent of the parameter) multiplied by a (non-polynomial) function of the parameter. The \emph{clique-width} parameter that we use is the minimum number of colors needed to construct the graph by a sequence of operations that take disjoint unions of colored graphs, add edges between all pairs of vertices matching a specified color pair, and recolor all vertices of a given color with a different color. Graphs of bounded treewidth also have bounded clique-width, but not necessarily vice versa. A \emph{monadic second-order} ($\MSO_1$) formula for a graph property is an expression that combines variables representing vertices or sets of vertices, quantifiers over these variables, terms that test whether a vertex belongs to a set or whether two vertices are adjacent, and the standard Boolean connectives over these terms. The connection between these two concepts is given by the following variation of Courcelle's theorem:

\begin{lemma}[Courcelle, Makowsky, and Rotics~\cite{CouMakRot-TCS-00}]
\label{lem:courcelle}
Any graph property that can be expressed by an $\MSO_1$ formula has a fixed-parameter tractable algorithm when parameterized by clique-width.
\end{lemma}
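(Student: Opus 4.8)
The plan is to reduce the statement to a question about finite tree automata running on the parse trees of clique-width expressions. Recall that a graph $G$ of clique-width at most $k$ is the value of a \emph{$k$-expression}: a term built from the operations of creating a single vertex with one of $k$ colors, taking disjoint unions, adding all edges between two color classes, and recoloring one color class to another. Such a term is naturally a rooted tree $T$ whose nodes carry labels from a finite alphabet $\Sigma_k$ of size bounded by a function of $k$. Two preliminary facts are needed. First, although computing the exact clique-width is hard, the approximation algorithm of Oum and Seymour produces, in time $f(k)\cdot n^{O(1)}$, a $g(k)$-expression for any input graph of clique-width $k$, so we may assume a parse tree $T$ with $O(n)$ nodes is in hand. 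Second, the vertices of $G$ are in bijection with the leaves of $T$, and sets of vertices with sets of leaves.

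Next I would carry out the main translation: from the given $\MSO_1$ formula $\varphi$ and the bound $k$, construct (effectively, depending only on $\varphi$ and $k$) an $\MSO$ formula $\widehat\varphi$ over $\Sigma_k$-labeled trees such that $G(T)\models\varphi$ if and only if $T\models\widehat\varphi$ for every $k$-expression $T$. Quantification over vertices and vertex sets in $\varphi$ becomes quantification over leaves and leaf sets in $\widehat\varphi$; the only nontrivial point is translating the adjacency predicate. Two leaves $x$ and $y$ are adjacent in $G(T)$ precisely when, at the lowest node $z$ on the path joining $x$ and $y$ that carries an edge-adding operation $\eta_{i,j}$, the colors currently borne by $x$ and by $y$ are $i$ and $j$ in some order. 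The color a leaf bears at a given node is determined by its initial color together with the sequence of recoloring operations on the path from the leaf up to that node; because there are only $k$ colors, ``leaf $x$ has color $c$ at node $z$'' is expressible as an $\MSO$ (in fact essentially first-order) formula over the tree. Assembling these pieces yields $\widehat\varphi$, exhibiting $G(T)$ as an $\MSO$-interpretation inside $T$, uniformly in $T$.

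Then I would invoke the Doner--Thatcher--Wright theorem (the tree analogue of the B\"uchi--Elgot--Trakhtenbrot theorem): every $\MSO$-definable language of $\Sigma_k$-labeled trees is recognized by a finite bottom-up tree automaton $\mathcal{A}_\varphi$, whose number of states depends only on $\widehat\varphi$, hence only on $\varphi$ and $k$. Running $\mathcal{A}_\varphi$ on $T$ takes time linear in $|T|=O(n)$. Combined with the $f(k)\cdot n^{O(1)}$ cost of obtaining the parse tree, the whole procedure runs in time (a function of the parameter) times (a polynomial in the input size), which is precisely fixed-parameter tractability parameterized by clique-width.

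The main obstacle is the adjacency translation in the middle step: one must argue that the relational structure of $G(T)$ can be recovered from the tree by an $\MSO$ interpretation, and the only reason this is possible is that clique-width keeps the palette bounded, so that ``the current color of a vertex at a node'' is a bounded amount of information that the logic can track. An equivalent route avoids the syntactic interpretation altogether: attach to each subterm its $\MSO_1$-type up to the quantifier rank of $\varphi$, relativized to the color classes; there are only finitely many such types, the type of a compound term is computable from the types of its parts by a Feferman--Vaught-style composition lemma, and that bottom-up computation is itself the desired automaton. Finally, it is worth noting why the hypothesis is $\MSO_1$ and not $\MSO_2$: quantification over edge sets cannot in general be simulated over the parse tree when the clique-width is small but the graph is dense, so the result genuinely depends on restricting to vertex-set quantification.
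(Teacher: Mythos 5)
The paper offers no proof of this lemma: it is quoted as a known theorem of Courcelle, Makowsky, and Rotics, so there is nothing internal to compare your argument against. What you have written is a correct reconstruction of the standard proof from the literature: reduce to a $\Sigma_k$-labeled parse tree (using Oum--Seymour to compute an approximate $k$-expression in FPT time, a worthwhile addition since the original theorem assumes the expression is given as input), interpret the graph's vertex set and adjacency relation inside the tree by an $\MSO$ interpretation, and evaluate the translated formula with a finite bottom-up tree automaton via Doner--Thatcher--Wright; the Feferman--Vaught-style type-composition route you mention at the end is the other standard presentation of the same automaton. One small imprecision: adjacency of leaves $x$ and $y$ should be defined existentially --- $x\sim y$ if and only if \emph{some} node above their least common ancestor is labeled $\eta_{i,j}$ with $x$ currently colored $i$ and $y$ currently colored $j$ --- rather than by inspecting the lowest edge-adding node on the path, whose color pair need not match $x$ and $y$ even when a higher node does add the edge. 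With that existential quantifier (and the observation, which you make, that ``current color of $x$ at node $z$'' is computed by a $k$-state automaton along the leaf-to-$z$ path and is hence $\MSO$-definable), the interpretation goes through and the lemma follows.
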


Thus, to prove fixed-parameter-tractability, we need only find an $\MSO_1$ formula expressing the existence of a fundamental rooted cycle basis. The following lemma accomplishes this.

\begin{lemma}
\label{lem:mso}
Let $G$ be a graph rooted at edge $e$. Then $G$ has a fundamental rooted cycle basis if and only if there exist two vertex sets $S_1$ and $S_2$ such that all of the following conditions are met:
\begin{itemize}
\item $S_1$ and $S_2$ partition the vertices of $G$: no vertex belongs to both sets and every vertex belongs to at least one of the two sets.
\item The two endpoints of $e$ belong to different sets of the partition.
\item For each $i$, set $S_i$ induces a connected set. That is, there does not exist another set $C$ such that $S_i\cap C$ and $S_i\setminus C$ are both nonempty and such that there are no edges connecting $S_i\cap C$ to $S_i\setminus C$.
\item For each $i$, set $S_i$ is acyclic. This can be expressed by the property that $S_i$ induces a graph with an empty $2$-core: for each nonempty subset $D$ of $S_i$, there exists a vertex $v\in D$ that has at most one neighbor in $D$.
\end{itemize}
\end{lemma}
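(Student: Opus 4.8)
The plan is to set up a direct correspondence between the partition $(S_1,S_2)$ and the two components of $T\setminus\{e\}$ for a spanning tree $T$ that generates the basis, after first arranging that $e\in T$. The point is that a spanning tree $T$ with $e\in T$ generates a rooted fundamental cycle basis exactly when every non-tree edge joins the two sides of the cut $\{e\}$, and that this cut condition is precisely the partition described in the statement.

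For the direction ``(\,$\Leftarrow$\,)'', suppose $S_1,S_2$ are given as in the statement. Since each $S_i$ induces a connected acyclic subgraph, $G[S_i]$ is a tree; adding $e$, which by hypothesis has one endpoint in $S_1$ and one in $S_2$, joins these two trees into a single spanning tree $T:=G[S_1]\cup G[S_2]\cup\{e\}$ of $G$ containing $e$. Now take any non-tree edge $f$. Since $G[S_1]$ and $G[S_2]$ are both entirely contained in $T$, the edge $f$ cannot have both endpoints in the same $S_i$, so one endpoint lies in $S_1$ and the other in $S_2$; the path of $T$ between them must traverse $e$, the only edge of $T$ from $S_1$ to $S_2$, so the fundamental cycle of $f$ contains $e$. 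Hence the fundamental cycle basis of $T$ is rooted.

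For ``(\,$\Rightarrow$\,)'', suppose $G$ has a fundamental rooted cycle basis, generated by a spanning tree $T$. I would first argue that $T$ may be taken to contain $e$: if the cycle space has dimension at least two there are two distinct non-tree edges, at most one of which is $e$, and the fundamental cycle of the other can contain $e$ only if $e\in T$, so $e\in T$ is forced; in the remaining degenerate cases ($G$ a tree, or the $2$-core of $G$ a single cycle through $e$) a suitable $T$ containing $e$ is obtained by a single edge swap along the unique fundamental cycle. With $e\in T$, let $S_1,S_2$ be the vertex sets of the two components of $T\setminus\{e\}$. These partition $V(G)$ with the endpoints of $e$ in different parts, and each $S_i$ is connected because its component of $T\setminus\{e\}$ is a spanning tree of $S_i$. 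Finally each $G[S_i]$ is acyclic: an edge of $G$ with both endpoints in $S_i$ but not in $T$ would be a non-tree edge whose fundamental cycle lies inside the $S_i$-component of $T$ and therefore avoids $e$, contradicting rootedness; so every such edge lies in $T$, whence $G[S_i]$ coincides with the acyclic $S_i$-component of $T$.

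It then remains to check that the $\MSO_1$-friendly reformulations in the statement coincide with ordinary connectivity and acyclicity of $G[S_i]$. The ``no set $C$'' condition is exactly the statement that $S_i$ cannot be split into two nonempty parts with no edges between them, i.e.\ that $G[S_i]$ is connected; and the ``every nonempty $D$ contains a vertex with at most one neighbour in $D$'' condition is the standard characterization of a forest as a graph containing no nonempty subgraph of minimum degree two, i.e.\ that $G[S_i]$ has empty $2$-core. I expect the only step needing genuine care to be the reduction to $e\in T$ and the handling of the low-dimensional special cases; the remainder is a routine unwinding of definitions.
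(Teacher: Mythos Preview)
Your argument is correct and follows essentially the same route as the paper: identify $(S_1,S_2)$ with the vertex sets of the two components of $T\setminus\{e\}$, and conversely rebuild $T$ as $G[S_1]\cup G[S_2]\cup\{e\}$. You are in fact more careful than the paper on one point: the paper tacitly assumes $e\in T$ when it writes ``let $T_1$ and $T_2$ be the two subtrees of $T$ formed by deleting edge $e$,'' whereas you justify this (forced when the cycle space has dimension $\ge 2$, and achievable by an edge swap otherwise); your only slip is that when $G$ is a tree no swap is needed since $T=G$ already contains $e$.
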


\begin{proof}
If $G$ has a rooted fundamental cycle basis generated by tree $T$, then let $T_1$ and $T_2$ be the two subtrees of $T$ formed by deleting edge $e$, and let $S_1$ and $S_2$ be the vertex sets of $T_1$ and $T_2$. Then these two sets partition the vertices, and induce connected subgraphs. The subgraphs they induce are also acyclic, because if the subgraph induced by $S_i$ included an edge $f$ that was not part of $T_i$, then $f$'s cycle in the fundamental cycle basis would not be rooted.

Conversely, suppose that $G$ and $e$ have sets $S_1$ and $S_2$ that meet these properties. Then $S_1$ and $S_2$ induce two disjoint subtrees of $G$ which together span all the vertices of $G$; adding $e$ to these two subtrees gives a spanning tree $T$. Every edge $f$ in $G\setminus T$ must connect one of these two subtrees to the other; the path in $T$ between the endpoints of $f$ must necessarily pass through $e$, because $e$ is the only edge in $T$ that spans the cut from $S_1$ to $S_2$. Thus, $T$ generates a rooted fundamental cycle basis.
\end{proof}

\begin{corollary}
\label{cor:mso}
The existence of a fundamental rooted cycle basis can be expressed by an $\MSO_1$ formula.
\end{corollary}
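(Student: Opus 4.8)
The goal is to prove \autoref{cor:mso}: that the existence of a fundamental rooted cycle basis can be expressed by an $\MSO_1$ formula. The plan is to directly translate the characterization given in \autoref{lem:mso} into the syntax of monadic second-order logic over graphs, where we are allowed to quantify over vertices and over vertex sets, test membership and adjacency, and combine these with Boolean connectives. Since \autoref{lem:mso} already reduces the problem to the existence of two vertex sets $S_1$ and $S_2$ satisfying four conditions, the formula will have the shape $\exists S_1 \exists S_2 \,(\varphi_1 \wedge \varphi_2 \wedge \varphi_3 \wedge \varphi_4)$, and the work is to exhibit each $\varphi_j$ as an $\MSO_1$ formula.

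I would write out the four subformulas in order. The partition condition $\varphi_1$ is immediate: $\forall v\,\bigl((v\in S_1 \vee v\in S_2) \wedge \neg(v\in S_1 \wedge v\in S_2)\bigr)$. The endpoint condition $\varphi_2$ requires naming the endpoints of the root edge $e$; since $e$ is a fixed part of the input (a rooted graph), its two endpoints are constants $t_1,t_2$ available to the formula, and $\varphi_2$ is just $(t_1\in S_1 \wedge t_2\in S_2) \vee (t_1\in S_2 \wedge t_2\in S_1)$. The connectivity condition $\varphi_3$ is the standard $\MSO_1$ encoding of ``$S_i$ induces a connected subgraph'': there is no set $C$ that properly splits $S_i$ with no edge crossing, i.e., $\neg\exists C\,\bigl(\exists x(x\in S_i\wedge x\in C)\wedge \exists y(y\in S_i\wedge y\notin C)\wedge \forall u\forall v(u\in S_i\wedge v\in S_i\wedge u\in C\wedge v\notin C \rightarrow \neg\,\mathrm{adj}(u,v))\bigr)$; this is exactly the phrasing already used in the statement of \autoref{lem:mso}. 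The acyclicity condition $\varphi_4$ uses the ``empty $2$-core'' reformulation from \autoref{lem:mso}: $\forall D\,\bigl((\exists w\,w\in D \wedge \forall v(v\in D\rightarrow v\in S_i)) \rightarrow \exists v(v\in D \wedge \text{$v$ has at most one neighbor in $D$})\bigr)$, where ``$v$ has at most one neighbor in $D$'' is $\forall a\forall b(a\in D\wedge b\in D\wedge \mathrm{adj}(v,a)\wedge \mathrm{adj}(v,b)\rightarrow a=b)$. Conjoining these and prefixing $\exists S_1\exists S_2$ yields the desired formula, so by \autoref{lem:mso} the formula holds exactly when $G$ has a fundamental rooted cycle basis.

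There is essentially no deep obstacle here; the only points worth checking carefully are that everything stays within $\MSO_1$ (first-order quantification over vertices, second-order over vertex sets only, never over edges or edge sets) and that the root edge $e$ is used only through its two endpoint vertices, which is legitimate since those are constants of the rooted-graph signature. I would remark that handling $e$ this way is precisely why the $\MSO_1$ fragment (rather than $\MSO_2$) suffices, which matters because \autoref{lem:courcelle} gives fixed-parameter tractability in clique-width only for $\MSO_1$. Combining \autoref{cor:mso} with \autoref{lem:courcelle} then immediately gives the fixed-parameter-tractable algorithm parameterized by clique-width, completing this part of the paper.
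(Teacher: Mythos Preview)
Your proposal is correct and follows exactly the route the paper intends: the corollary is stated without proof because it is meant to be an immediate consequence of \autoref{lem:mso}, and you have simply written out the translation of each of that lemma's four conditions into $\MSO_1$ syntax. Your observation that the root edge $e$ enters only through its two endpoint vertices (hence no edge-set quantification is needed) is precisely the point that keeps the formula within $\MSO_1$ rather than $\MSO_2$.
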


\begin{theorem}
\label{thm:fun-fpt}
Determining whether a given rooted graph has a rooted fundamental cycle basis is fixed-parameter tractable when parameterized by the clique-width of the graph.
\end{theorem}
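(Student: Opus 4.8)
The plan is to derive the theorem directly from the logical characterization of \autoref{lem:mso} together with the algorithmic meta-theorem of \autoref{lem:courcelle}. By \autoref{cor:mso} (and the proof of \autoref{lem:mso}), the property that $(G,e)$ has a fundamental rooted cycle basis is witnessed by an $\MSO_1$ formula $\varphi(x_1,x_2)$ with two free vertex variables: it asserts the existence of a partition of $V(G)$ into sets $S_1\ni x_1$ and $S_2\ni x_2$, each of which induces a connected and acyclic subgraph, these two conditions being $\MSO_1$-expressible in the manner spelled out in \autoref{lem:mso}. The one thing to notice is that the root edge enters $\varphi$ only through its two endpoints, so it is those two vertices --- not the edge as such --- that we must supply.

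Since \autoref{lem:courcelle} is phrased for $\MSO_1$ \emph{sentences} (with no free variables), the next step is to absorb the endpoints $a$ and $b$ of $e$ into the structure by recoloring $a$ with a fresh color $A$ and $b$ with a fresh color $B$ that no other vertex carries; call the resulting vertex-colored graph $G'$. Then the sentence
\[
  \exists x_1\,\exists x_2\,\bigl(A(x_1)\wedge B(x_2)\wedge\varphi(x_1,x_2)\bigr)
\]
holds in $G'$ if and only if $(G,e)$ has a fundamental rooted cycle basis. Pinning down two distinguished vertices with private colors changes the clique-width by at most an additive constant: from any $k$-expression for $G$ one obtains a $(k+2)$-expression for $G'$ by carrying the two vertices in their own colors throughout the construction, so $\operatorname{cw}(G')\le\operatorname{cw}(G)+2$.

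Applying \autoref{lem:courcelle} to this sentence on $G'$ then gives an algorithm that decides the property in time $f\bigl(\operatorname{cw}(G')\bigr)\cdot\operatorname{poly}(n) = f\bigl(\operatorname{cw}(G)+2\bigr)\cdot\operatorname{poly}(n)$ for a computable function $f$, which is precisely fixed-parameter tractability in the clique-width of $G$, as desired.

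I do not anticipate a real obstacle here, since the substance of the argument is already contained in \autoref{lem:mso} and in the meta-theorem \autoref{lem:courcelle}. The only point that needs care is the bookkeeping of the middle paragraph: turning the designated edge $e$ and the two-variable formula $\varphi$ into a closed $\MSO_1$ sentence on a colored graph without inflating the parameter. (Alternatively one can skip the colors altogether by using the routine observation that \autoref{lem:courcelle} is unaffected by adding a bounded number of unary vertex predicates to the input structure, which is the usual way distinguished vertices are handled.)
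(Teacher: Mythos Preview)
Your proposal is correct and follows the same approach as the paper, which simply states that the result is immediate from \autoref{lem:courcelle} and \autoref{cor:mso}. You are in fact more careful than the paper about the bookkeeping for the root edge (handling the free variables via two distinguished colors and bounding the resulting clique-width increase), a point the paper glosses over entirely.
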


\begin{proof}
This follows immediately from \autoref{lem:courcelle} and \autoref{cor:mso}.
\end{proof}

\end{document}